\numberwithin{equation}{section}
\theoremstyle{plain}
\newtheorem{lem}{Lemma}[section]
\begin{document}

\begin{frontmatter}
\title{Jointly Sparse Global SIMPLS Regression}
\runtitle{Jointly Sparse Global SIMPLS Regression}

\begin{aug}
\author{\fnms{Tzu-Yu} \snm{Liu}\ead[label=e1]{tyliu@eecs.berkeley.edu}}
\address{University of California, Berkeley, CA, USA.\\
\printead{e1}}

\author{\fnms{Laura} \snm{Trinchera}\ead[label=e2]{laura.trinchera@neoma-bs.fr}}
\address{NEOMA Business School, Mont-Saint-Aignan, France.\\
\printead{e2}}

\author{\fnms{Arthur } \snm{Tenenhaus}\ead[label=e3]{Arthur.Tenenhaus@supelec.fr}}
\address{Sup\'elec, Gif-sur-Yvette, France.\\
\printead{e3}}

\author{\fnms{Dennis} \snm{Wei}\ead[label=e4]{ dwei@us.ibm.com}}
\address{IBM T. J. Watson Research Center, NY, USA.\\
\printead{e4}}

\author{\fnms{Alfred O.} \snm{Hero}\thanksref{t1}\ead[label=e5]{hero@eecs.umich.edu}}
\address{University of Michigan, Ann Arbor, MI, USA.\\
\printead{e5}}

\thankstext{t1}{ to whom correspondence should be addressed.}  
\runauthor{T.-Y. Liu et al.}


\end{aug}

\begin{abstract}
Partial least squares (PLS)  regression  combines dimensionality reduction and prediction using a latent variable model. Since partial least squares regression (PLS-R) does not require matrix inversion or diagonalization,  it can be applied to problems with large numbers of variables. As predictor dimension increases, variable selection becomes essential to avoid over-fitting, to provide more accurate predictors and to yield more interpretable parameters. We propose a global variable selection approach that penalizes the total number of variables across all PLS components. Put another way, the proposed global penalty encourages the selected variables to be shared among the PLS components. We formulate PLS-R with joint sparsity as a variational optimization problem with objective function equal to a novel global SIMPLS criterion plus a mixed norm sparsity penalty on the weight matrix. The mixed norm sparsity penalty is the $\ell_1$ norm of the $\ell_2$ norm on the weights corresponding to the same variable used over all the PLS components. A novel augmented Lagrangian method is proposed to solve the optimization problem and soft thresholding for sparsity occurs naturally as part of the iterative solution. Experiments show that the modified PLS-R attains better or as good performance with many fewer selected predictor variables.
\end{abstract}

\begin{keyword}
\kwd{PLS, variable selection, dimension reduction, augmented Lagrangian optimization}
\end{keyword}
\tableofcontents
\end{frontmatter}

\section{Introduction}
With advancing technology comes the need to extract information from increasingly high-dimensional data, whereas the number of samples is often limited. Dimension reduction techniques and models incorporating sparsity become important solution strategies. Partial least squares regression (PLS-R) combines dimensionality reduction and prediction using a latent variable model. It was first developed for regression analysis in chemometrics \citep{Wold1983,Sjostr1983}, and has been successfully applied to many different areas, including sensory science and, more recently, genetics \citep{martens1999validation,LeCRos08, chun2009expression,chung2010sparse,Chun2011}. 

Moreover, PLS-R algorithm is designed precisely to operate with high dimensional data. Since the first proposed algorithm does not require matrix inversion nor diagonalization but deflation to find the latent components, it can be applied to problems with large numbers of variables. The latent components reduce the dimension by constructing linear combinations of the predictors, which successfully solved the collinearity problems in chemometrics \citep{WolRuh84}. However, the linear combinations are built on all the predictors. The resulting PLS-R model tends to overfit when the number of predictors increases for a fixed number of samples. Therefore, variable selection becomes essential for PLS-R in high-dimensional sample-limited problems. It not only avoids over-fitting, but also provides more accurate predictors and yields more interpretable estimates. For this reason sparse PLS-R was developed by H. Chun and S. Keles \citep{Keles2010}.  The sparse PLS-R algorithm  performs variable selection and dimension reduction simultaneously using an $\ell_1$ type variable selection penalty. However, the $\ell_1$ penalty used in \citep{Keles2010} penalizes each variable in each component independently and this can result in different sets of variables being selected for each PLS component  leading to an excessively large number of variables. 

In this work we propose a global criterion for PLS that changes the sequential optimization for a K component model in Statistically Inspired Modification of PLS (SIMPLS) \citep{Jong1993} into a unified optimization formulation, which we refer to as global SIMPLS. This enables us to perform global variable selection, which penalizes the total number of variables across all PLS components. We formulate PLS-R with global sparsity as a variational optimization problem with the objective function equal to the global SIMPLS criterion plus a mixed norm sparsity penalty on the weight matrix. The mixed norm sparsity penalty is the $\ell_1$ norm of the $\ell_2$ norm on the weights corresponding to the same variable used over all the PLS components. The proposed global penalty encourages the selected variables to be shared among all the $K$ PLS components.
 A novel augmented Lagrangian method is proposed to solve the optimization problem, which enables us to obtain the global SIMPLS components and to perform joint variable selection simultaneously. A greedy algorithm is proposed to overcome the computation difficulties in the iterations, and soft thresholding for sparsity occurs naturally as part of the iterative solution. Experiments show that our approach to PLS regression attains better or as good performance (lower mean squared error, MSE) with many fewer selected predictor variables. These experiments include a chemometric data set, and a human viral challenge study dataset, in addition to numerical simulations.

We review the developments in PLS-R for both univariate and multivariate responses in Section \ref{sec:PLS overview}, in which we discuss different objective functions that have been proposed for PLS-R, particularly the Statistically Inspired Modification of PLS (SIMPLS) proposed by de Jong \citep{Jong1993}. In Section \ref{sec:mix norm}, we formulate the Jointly Sparse Global SIMPLS-R by proposing a new criterion that jointly optimizes over $K$ weight vectors (components) and imposing a mixed norm sparsity penalty to select variables jointly. The algorithmic implementation is discussed in Section \ref{sec:algorithm} with simulation experiments presented in Section \ref{sec:experiment}. The proposed Jointly Sparse Global SIMPLS Regression is applied to two applications: (1) Chemometrics in Section \ref{sec:application1} and (2) Predictive health studies in Section \ref{sec:application2}. Section \ref{sec:conclusion} concludes the paper.

\section{Partial Least Squares Regression}
\label{sec:PLS overview}
Partial Least Squares (PLS) methods embrace a suite of data analysis techniques based on algorithms belonging to the PLS family. These algorithms consist of various extensions of the Nonlinear estimation by Iterative PArtial Least Squares (NIPALS) algorithm that was proposed by Herman Wold \citep{Wold66a} as an alternative algorithm for implementing  Principal Component Analysis (PCA) \citep{Hot33}. The NIPALS approach was slightly modified by Svante Wold, and Harald Martens, in order to obtain a regularized component based regression tool, known as PLS Regression (PLS-R) \citep{Wold1983,WolRuh84}. 

Suppose that the data consists of  $n$ samples  of $p$ independent variables $X \in R^{n \times p} $ and $q$ dependent variables (responses) $Y \in R^{n \times q} $. In standard PLS Regression the aim is to define orthogonal latent components in $R^{ p}$, and then use such latent components as predictors for $Y$ in an ordinary least squares framework. The X weights used to compute the latent components can be specified by using iterative algorithms belonging to the NIPALS family or by a sequence of eigen-decompositions. The general underlying model is $X=TP'+E$ and $Y=TQ'+F$, where $T \in R^{n \times K}$ is the latent component matrix, $P \in R^{p \times K}$ and $Q \in R^{q \times K}$ are the loading matrices, $K$ is the number of components, $E$ and $F$ are the residual terms. The latent components in $T = [\begin{array}{*{20}c}   {{\bf t}_1 } & {{\bf t}_2 } & {...} & {{\bf t}_K }  \\\end{array}]$ are linear combinations of the independent variables, hence PLS can be viewed as a dimensional reduction technique, reducing the dimension from $p$ to $K$. The latent components should be orthogonal to each other either by construction as in NIPALS \citep{Wold1983,WolRuh84} or via constrained optimizations as in SIMPLS \citep{Jong1993}. This allows PLS to build a parsimonious model for high dimensional data with collinearity \citep{WolRuh84}.

\subsection{Univariate Response}
We assume, without loss of generality, that all the variables have been centered in a pre-processing step. For univariate $Y$, i.e $q=1$, PLS Regression, also often denoted as PLS1, successively finds $X$ weights $R = [\begin{array}{*{20}c}   {{\bf r}_1 } & {{\bf r}_2 } & {...} & {{\bf r}_K }  \\\end{array}]$ as the solution to the constrained optimization
\begin{equation}
\label{PLS}
    {\bf r}_k   =   \mathop {\arg \max }\limits_{\bf r}  \{{\bf r}'X'_{(k-1)}YY'X_{(k-1)} {\bf r}\}  \ \ s.t. \ \     {\bf r}'{\bf r} = 1.
\end{equation}
where $X_{(k-1)}$ is the matrix of the residuals (i.e., the deflated matrix) from the regression of the $X$-variables on the first $k-1$ latent components, and  $X_{(0)}=X$. 
These weights are then used to find the latent components 
$T = [\begin{array}{*{20}c}   X_{(0)}{{\bf r}_1 } & X_{(1)}{{\bf r}_2 } & {...} & X_{(K-1)}{{\bf r}_K }  \\\end{array}]$.
Such components can be also expressed in terms of original variables (instead of deflated variables), i.e., as $T=XW$, where $W$ is the matrix containing the weights to be applied to the original variables in order to exactly obtain the latent components \citep{Tenenhaus1998}. 

For a fixed number of components, the response variable $Y$ is predicted in an ordinary least squares regression model, where the latent components play the role of the exogenous variables,
\[ \hat Q=\arg \mathop {\min }\limits_Q \{ ||Y - TQ' ||_2 \}  = (T' T)^{ - 1} T' Y. \] This provides the regression coefficients $\hat \beta ^{PLS}  = W\hat Q'$ for the model $Y=X{\bf \beta}^{PLS}+F$.

Depending on the number of selected latent components the length $\|\hat{\beta}^{PLS}\|_2$ of the vector of PLS coefficients changes. In particular, de Jong \citep{DeJ95} had shown that the sequence of these coefficient vectors has lengths that are  strictly increasing as  the number of components increases. This sequence converges to the ordinary least squares coefficient vector  and the maximum number of latent components obtainable equals the rank of the $X$ matrix. Thus, by using a number of latent components $K<rank(X)$, PLS-R performs a dimension reduction by shrinking the $X$ matrix. Hence, PLS-R is a suitable tool for problems for which the data contains many more variables $p$ than observations $n$.

The objective function in (\ref{PLS}) can be interpreted as maximizing the squared covariance between $Y$ and the latent component: $corr^2 (Y,X_{k-1}{\bf r}_k){\mathop{\rm var}} (X_{k-1}{\bf r}_k)$. Because the response $Y$ has been taken into account to formulate the latent matrix, PLS usually has better performance in prediction problems than principle component analysis (PCA) does. This is one of the main differences between PLS-R and PCA \citep{Boulesteix2006}.

\subsection{Multivariate Response}
Similarly to univariate response PLS-R, multivariate response PLS-R selects latent components in $R^p$ and $R^q$ , i.e., ${\bf t}_k$ and ${\bf v}_k$, such that the covariance between ${\bf t}_k$ and ${\bf v}_k$ is maximized.  For a specific component, the sets of weights ${\bf r}_k \in R^p$ and ${\bf c}_k \in R^q$  are obtained by solving 
\begin{eqnarray}
& &\max \{ {\bf t}' {\bf v}\}  = \max \{ {\bf r}' X'_{(k-1)} Y_{(k-1)}^{}{\bf c}\} \nonumber\\
& & s.t. \ \ {\bf r}'{\bf r}={\bf c}'{\bf c}=1
\label{PLS2crit}
\end{eqnarray}
where ${\bf t}_k=X_{(k-1)}{\bf r}_k$, ${\bf v}_k=Y_{(k-1)}{\bf c}_k$, and  $X_{(k-1)}$ and $Y_{(k-1)}$ are the deflated matrices associated with $X$ and $Y$. Notice that the optimal solution ${\bf c}_k$ should be proportional to $Y'_{k-1}X_{k-1}^{}{\bf r}_k$. Therefore, the optimization in (\ref{PLS2crit}) is equivalent to 
\begin{eqnarray}
   &  & \mathop { \max }\limits_{\bf r} \ \{{\bf r}'{X'_{k-1}Y_{k-1}^{}Y'_{k-1}X_{k-1}^{}} {\bf r}\}  \nonumber\\
& & s.t.   \ \    {\bf r}'{\bf r} = 1.
\label{PLS2}
\end{eqnarray}
For each component, the solution to this criterion can be obtained by using a so called PLS2 algorithm. A detailed description of the iterative algorithm as presented by H\"oskuldsson \citep{Hoskuldsson1988} is in Algorithm \ref{NIPALS} .   
 
\begin{algorithm}
{
\For{ k=1:K}{
initialize ${\bf r}$\\
$X=X_{new}$\\
$Y=Y_{new}$\\
\While{solution has not converged}{
${\bf t}=X{\bf r}$\\
${\bf c}=Y'{\bf t}$\\
Scale {\bf c} to length 1\\
${\bf v}=Y{\bf c}$\\
${\bf r}=X'{\bf v}$\\
Scale {\bf r} to length 1\\
}
loading vector ${\bf p}=X'{\bf t}/({\bf t}'{\bf t})$\\
deflate $X_{new}=X-{\bf t}{\bf p}'$\\
regression ${\bf b}=Y'{\bf t}/({\bf t}'{\bf t})$\\
deflate $Y_{new}=Y-{\bf t}{\bf b}'$\\
${\bf r_k}={\bf r}$\\
}
}

\caption{ PLS2 algorithm \label{NIPALS}}
\end{algorithm}

In 1993 de Jong proposed a variant of the PLS2 algorithm, called Statistically Inspired Modification of PLS (SIMPLS), which calculates the PLS latent components directly as linear combinations of the original variables \citep{Jong1993}.
The SIMPLS was first developed as the solution to an optimization problem
\begin{eqnarray}
 \label{SIMPLS}
& &     {\bf w}_k   = \mathop {\arg \max }\limits_{\bf w}  ({\bf w}'{X'YY'X} {\bf w}) \ \\
& & s.t. \ \ {\bf w}'{\bf w}  = 1,  \ \ {\bf  w}' {X'X}{\bf w}_j  = 0 \  for \  j=1,...,k-1. \nonumber 
 \end{eqnarray}
Ter Braak and de Jong \citep{Braak2006} provided a detailed comparison between the objective functions for PLS2 in (\ref{PLS2}) and SIMPLS in (\ref{SIMPLS}) and showed that the successive weight vectors  ${\bf w}_k$ can be derived either from the deflated data matrices or the original variables in PLS2 and SIMPLS respectively. Let $W^{+}$ be the Moore-Penrose inverse of $W=[{{\bf w}_1}\ {{\bf w}_2}\ ... \ {{\bf w}_{k-1}}]$. The PLS2 algorithm (Algorithm \ref{NIPALS}) is equivalent to solving the optimization 
\[
    {\bf w}_k  = \mathop {\arg \max }\limits_{\bf w} ({\bf w}'{X'YY'X} {\bf w}) \]
\[ \ s.t. \
    {\bf w}'(I-WW^{+}){\bf w} = 1, \\ {\bf  w}' {X'X}{\bf w}_i  = 0 \  for \  i=1,...,k-1.
\]
Both NIPALS and SIMPLS have the same objective function but each is maximized under  a different normalization constraint. NIPALS and SIMPLS  are equivalent when Y is univariate, but provide slightly different weight vectors in multivariate scenarios. The performance depends on the nature of the data, but SIMPLS  appears easier to interpret since it does not involve deflation of the data set \citep{Jong1993}. We develop our globally sparse PLS-R based on the SIMPLS optimization formulation.


\section{Sparse Partial Least Squares Regression}
\label{sec:mix norm}

\subsection{$\ell_1$ Penalized Sparse PLS Regression}
One approach to sparse PLS-R is to add the $\ell_1 $ norm of the weight vector, a sparsity inducing penalty, to (\ref{SIMPLS}). The solution for the first component would be obtained by solving
\begin{eqnarray}
\label{PLS_keles}
& &     {\bf w}_1   = \mathop {\arg \max }\limits_{\bf w}  ({\bf w}'{X'YY'X} {\bf w}) \\ \nonumber
& &      s.t. \ \ {\bf w}'{\bf w}  = 1,   \ ||{\bf w}|| _1\le \lambda. 
\end{eqnarray}
The addition of the $\ell_1$ norm is  similar to SCOTLASS (simplified component lasso technique), the sparse PCA proposed by Jolliffe \citep{Jolliffe2003}. However, the solution of SCOTLASS is not sufficiently sparse, and the same issue remains in (\ref{PLS_keles}). Chun and Keles \citep{Keles2010} reformulated the problem, promoting the exact zero property by imposing the $\ell_1$ penalty on a surrogate of the weight vector instead of the original weight vector \citep{Keles2010}. For the first component, they solve the following optimization by alternating between  updating ${\bf w}$ and updating  ${\bf z}$  (block coordinate descent). 
\[
{\bf w}_1,{\bf z}_1=
\mathop {\arg \min }\limits_{{\bf w},{\bf z}} \{  - \kappa {\bf w}' X' YY' X{\bf w} + (1 - \kappa )({\bf z} -{\bf w})' X' YY' X({\bf z} - {\bf w}) + \lambda _1 ||{\bf z}||_1  + \lambda _2 ||{\bf z}||_2^2 \}\ \] \[ s.t. \ {\bf w}' {\bf w} = 1
\]
Allen {\it et. al} proposed a general framework for regularized PLS-R \cite{Allen2013}. 
\[
\mathop {\max }\limits_{{\bf w},{\bf v}} \ {\bf w}'M{\bf v} - \lambda P({\bf w})\ \ s.t. \ \ {\bf w}'{\bf w} \le 1,{\rm  }{\bf v}'{\bf v} = 1
\]
in which $M$ is the cross-product matrix $X'Y$, and the regularization function $P$ is a convex penalty function. The formulation is a relaxation of SIMPLS with penalties being applied to the weight vectors, and can be viewed as a generalization of \citep{Keles2010}.

As mentioned in the Introduction, these formulations (\cite{Keles2010, Allen2013})  penalize the variables in each PLS component independently. This paper proposes an alternative in which variables are penalized simultaneously over all components.  First, we define the global weight matrix, consisting of the $K$ weight vectors, as

\[
W = \left[ {\begin{array}{*{20}c}
   {\begin{array}{*{20}c}
   |  \\
   {{\bf w}_1 }  \\
   |  \\
\end{array}} & {\begin{array}{*{20}c}
   |  \\
   {{\bf w}_2 }  \\
   |  \\
\end{array}} & {\begin{array}{*{20}c}
   {}  \\
    \cdots   \\
   {}  \\
\end{array}} & {\begin{array}{*{20}c}
   |  \\
   {{\bf w}_K }  \\
   |  \\
\end{array}}  \\
\end{array}} \right] = \left[ {\begin{array}{*{20}c}
   {\begin{array}{*{20}c}
    -  & {{\bf w}'_{(1)} } &  -   \\
\end{array}}  \\
   {\begin{array}{*{20}c}
    -  & {{\bf w}'_{(2)} } &  -   \\
\end{array}}  \\
   {\begin{array}{*{20}c}
   {} &  \vdots  & {}  \\
\end{array}}  \\
   {\begin{array}{*{20}c}
    -  & {{\bf w}'_{(p)} } &  -   \\
\end{array}}  \\
\end{array}} \right].
\]
Notice that  the elements in a particular row of W, i.e., ${\bf w}'_{(j)}$, are all associated with the same predictor variable ${\bf x}_j$. Therefore, rows of zeros correspond to variables that are not selected. To illustrate the drawbacks of penalizing each variable in each component independently, as in \citep{Keles2010},  suppose that each entry in $W$ is selected independently with probability $p_1$. The probability that the $(j)_{th}$ variable is not selected becomes $(1-p_1)^K$, and the probability that all the variables are selected by at least one weight vector  is $[1-(1-p_1)^K]^p$, which increases as the number of weight vectors $K$ increases. This suggests that for large $K$ the local variable selection approach of  \citep{Keles2010} may not lead to an overall sparse and parsimonious PLS-R model. In such cases a group sparsity constraint can be employed to limit the overall number of selected variables. 

\subsection{Jointly Sparse Global SIMPLS Regression}
The Jointly Sparse Global SIMPLS Regression variable selection problem is to find the top $K$ weight vectors that best relate $X$ to $Y$, while using limited number of variables. This is a subset selection problem that is equivalent to adding a constraint on the $\ell_0$ norm of the vector consisting of  the norms of the rows of $W$, i.e,  the number of nonzero rows in $W$.   For concreteness we use the $\ell_2$ norm for the rows. This leads to the optimization problem

\begin{eqnarray}
\label{global_PLS}
& & W  = \mathop{\arg \min}\limits_W  - \frac{1}{{n^2 }}\sum\limits_{k = 1}^K {\bf w}'_kX'YY'X {\bf w}_k \\ \nonumber
& & \ \ s.t.  \ 
  { {||{\boldsymbol \varpi } ||_0 }\le t },   \  {{\bf w}'_k{\bf w}_k = 1 \ \forall \ k},\ and \ {\bf  w}'_k X'X {\bf w}_i  = 0 \ \forall \ \ i \ne k
\end{eqnarray}
in which 
\[
 {\boldsymbol \varpi } = \left[ {\begin{array}{*{20}c}
   {||{\bf w}_{(1)} ||_2 }  \\
   {||{\bf w}_{(2)} ||_2 }  \\
    \vdots   \\
   {||{\bf w}_{(p)} ||_2 }  \\
\end{array}} \right].
\]

The objective function (\ref{global_PLS}), which we refer to as global SIMPLS, is the sum of the objective functions  (\ref{SIMPLS}) in the first $K$ iterations of SIMPLS. Instead of the sequential greedy solution in PLS2 algorithm, the proposed jointly sparse global SIMPLS Regression solves for the $K$ weight vectors simultaneously. We introduce the $\frac{1}{n^2}$ factor to the objective function to interpret it as an empirical covariance. Given the complexity of this combinatorial problem,  as in standard optimization practice, we relax the $\ell_0$ norm optimization to a mixed norm structured sparsity penalty \citep{Bach2008}. 
\begin{eqnarray}
\label{global_PLS_relax}
W = \mathop{\arg \min}\limits_W  - \frac{1}{{n^2 }}\sum\limits_{k = 1}^K {\bf w}'_k X'YY'X{\bf w}_k +\lambda\sum\limits_{j = 1}^p {||{\bf w}_{(j)} ||_2 }   \\  \nonumber
s.t. \ \ {\bf w}'_k{\bf w}_k = 1 \ \forall\ k\ and \ {\bf  w}'_k X'X {\bf w}_i  = 0 \ \forall  \ i \ne k
\end{eqnarray}
The $\ell_2$ norm of each row of $W$ promotes grouping entries in $W$ that relate to the same predictor variable, whereas the $\ell_1$ norm promotes a small number of groups, as in (\ref{PLS_keles}).


\section{Algorithmic Implementation for Jointly Sparse Global SIMPLS  Regression}
\label{sec:algorithm}
Constrained eigen-decomposition and group variable selection are each well-studied problems  for which efficient algorithms have been developed.  We propose to solve the optimization (\ref{global_PLS_relax}) by augmented Lagrangian methods, which allows one to solve (\ref{global_PLS_relax}) by variable splitting iterations. Augmented Lagrangian methods  introduce a new variable $M$, constrained such that $M=W$, such that the row vectors ${\bf m}_{(j)}$ of $M$ obey the same structural pattern as the rows of $W$:
\begin{eqnarray}
\label{global_PLS_AL}
& & \mathop{\min}\limits_{W,M}   - \frac{1}{{n^2 }}\sum\limits_{k = 1}^K {\bf w}'_k X'YY'X{\bf w}_k +\lambda\sum\limits_{j = 1}^p {||{\bf m}_{(j)} ||_2 }   \nonumber\\ 
& & \ \ s.t.   \  {\bf w}'_k{\bf w}_k   = 1 \ \forall \ k\ ,  \ {\bf  w}'_k X'X{\bf w}_i  = 0 \  \forall \ i \ne k, \ and \ M=W
\end{eqnarray}
The optimization (\ref{global_PLS_AL}) can be solved by replacing  the constrained problem by an unconstrained one with an additional penalty on the Frobenius norm of the difference $M-W$. This penalized optimization can be iteratively solved by an alternating direction method of multipliers (ADMM) algorithm \cite{Eckstein1992,Goldstein2009,Afonso2011,Boyd2011,Hong2012,Goldstein2012,Ramani2012,Nien2014},  a block coordinate descent method that  alternates between optimizing over $W$ and over $M$ (See Algorithm \ref{PLS_algorithm}). We initialize Algorithm \ref{PLS_algorithm} with $M^{(0)}$ equal to the solution of SIMPLS, and  $D^{(0)}$ equal to the zero matrix. Setting the parameter $\mu$ is nontrivial \cite{Ramani2012}, and is hand-tuned for fastest convergence in some applications \cite{Afonso2011}. Once the algorithm converges, the final PLS regression coefficients are obtained by applying SIMPLS regression on the selected variables keeping the same number of components $K$. The optimization over $W$ can be further simplified to a secular equation problem, whereas the optimization over $M$ can be shown to reduce to a soft thresholding operation. The algorithm iterates until the stopping criterion based on the norm of the residuals $||W^{(\tau)}-M^{(\tau)}||_F<\epsilon$ is satisfied, for some given tolerance $\epsilon$. As described later in the experimental comparisons section, the parameters $\lambda$ and $K$ are decided by cross validation.

\begin{algorithm}
{
set $ \tau=0$, choose $\mu>0$, $M^{(0)}$,  $D^{(0)}$\;
\While{stopping criterion is not satisfied}{
$ W^{(\tau + 1)} = \mathop {\arg \min }\limits_W  - \frac{1}{{n^2 }}\sum\limits_{k = 1}^K {{\bf w}'_k X'YY'X{\bf w}_k  + \frac{\mu}{2}||W - M^{(\tau)} - D^{(\tau)}||_F^2 }  $\\  $ \ \ \ \ \ s.t. 
     \  {\bf w}'_k{\bf w}_k = 1\ \forall k, \  {\bf  w}'_k X'X{\bf w}_i  = 0 \ \forall \  i \ne k$\; 
$ M^{(\tau + 1)} = \mathop {\arg \min }\limits_M \lambda \sum\limits_{j = 1}^p {||{\bf m}_{(j)} ||_2  + } \frac{\mu}{2}||W^{(\tau + 1)} - M - D^{(\tau)}||_F^2  $\;
$ D^{(\tau + 1)} = D^{(\tau)} - W^{(\tau + 1)} + M^{(\tau + 1)} $\;
$\tau=\tau+1$\;
}
}
\caption{{Algorithm for solving the global SIMPLS with global variable selection problem using the augmented Lagrangian method.}\label{PLS_algorithm}}
\end{algorithm}

\paragraph{Optimization over $W$}

The following optimization in Algorithm \ref{PLS_algorithm} is a nonconvex quadratically constrained quadratic program (QCQP). 
\[
 W^{(\tau + 1)} = \mathop {\arg \min }\limits_W  - \frac{1}{{n^2 }}\sum\limits_{k = 1}^K {{\bf w}'_k X'YY'X{\bf w}_k  + \frac{\mu}{2}||W - M^{(\tau)} - D^{(\tau)}||_F^2 }  \]
\[ s.t.      \  {\bf w}'_k{\bf w}_k = 1\ \forall k, \  {\bf  w}'_k X'X{\bf w}_i  = 0 \ \forall \  i \ne k\; 
\]

We propose solving for the $K$ vectors in $W$ successively by a greedy approach. Let ${\bf m}_k$ and ${\bf d}_k$ be the columns of the matrices $M$ and $D$, and ${\boldsymbol \omega}_k={\bf m}_k+{\bf d}_k$. The optimization over $W$ becomes
\begin{eqnarray}
\label{PLS_greedy}
 {\bf w}_k^{(\tau + 1)} & = & \mathop {\arg \min }\limits_{{\bf w}}  - \frac{1}{{n^2 }} {{\bf w}' X'YY'X{\bf w}  + \frac{\mu}{2}||{\bf w}- {\boldsymbol \omega}_k||_2^2 }  \nonumber\\
&  s.t. &  \      \   {\bf w}'{\bf w}   = 1, \  {\bf  w}' X'X{\bf w}_i = 0 \ \forall \  i<k.
\end{eqnarray}


\begin{lem}
Let $N$ be an orthonormal basis for the orthogonal complement of $\{X'X{\bf w}_i\}, i<k$. The optimization (\ref{PLS_greedy}) can be solved by the method of Lagrange multipliers. The solution is ${\bf w}_k  = N(A - \alpha I)^{ - 1} {\bf b}$, in which $A = -  \frac{1}{{n^2 }} N'X' YY' X N$, ${\bf b} = \frac{\mu }{2}N' {\boldsymbol \omega}_k $ and $\alpha$ is the minimum solution that satisfies ${\bf b}'(A-\alpha I)^{-2}{\bf b}=1$.
\end{lem}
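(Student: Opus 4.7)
The plan is to eliminate the linear orthogonality constraints by parametrizing the feasible set in the null space of $\{X'X\mathbf{w}_i\}_{i<k}$, reduce the problem to a quadratic program on a unit sphere in $R^{p-(k-1)}$, derive the stationarity condition via Lagrange multipliers, and finally identify which root of the resulting secular equation corresponds to the global minimum.

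First I would change variables as $\mathbf{w} = N\mathbf{y}$. By construction the columns of $N$ are orthogonal to every $X'X\mathbf{w}_i$, $i<k$, so each constraint $\mathbf{w}'X'X\mathbf{w}_i = 0$ is automatically satisfied. Since $N'N = I$, the spherical constraint $\mathbf{w}'\mathbf{w}=1$ becomes $\mathbf{y}'\mathbf{y}=1$, and $\|N\mathbf{y} - \boldsymbol{\omega}_k\|_2^2 = 1 - 2(N'\boldsymbol{\omega}_k)'\mathbf{y} + \|\boldsymbol{\omega}_k\|_2^2$ once the norm constraint is used. Dropping the constant $\tfrac{\mu}{2}(1+\|\boldsymbol{\omega}_k\|_2^2)$, the problem reduces to
\[ \min_{\mathbf{y}}\; \mathbf{y}'A\mathbf{y} - 2\mathbf{b}'\mathbf{y} \quad \text{subject to} \quad \mathbf{y}'\mathbf{y} = 1, \]
with $A$ and $\mathbf{b}$ exactly as in the lemma statement.

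Second I would form the Lagrangian $L(\mathbf{y},\alpha) = \mathbf{y}'A\mathbf{y} - 2\mathbf{b}'\mathbf{y} - \alpha(\mathbf{y}'\mathbf{y} - 1)$ and set $\nabla_{\mathbf{y}} L = 0$ to obtain the stationarity condition $(A - \alpha I)\mathbf{y} = \mathbf{b}$. For $\alpha$ not an eigenvalue of $A$, inverting yields $\mathbf{y} = (A - \alpha I)^{-1}\mathbf{b}$; substituting into $\mathbf{y}'\mathbf{y}=1$ produces the secular equation $\mathbf{b}'(A - \alpha I)^{-2}\mathbf{b} = 1$, and reversing the change of variables gives the claimed formula $\mathbf{w}_k = N(A - \alpha I)^{-1}\mathbf{b}$.

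The main obstacle is the last step: since $A = -n^{-2}(Y'XN)'(Y'XN)$ is negative semidefinite, the reduced program is a nonconvex QP on the sphere, and the Lagrangian generically has several stationary points (one secular-equation root per spectral gap of $A$), so one must explain why the smallest root is the global minimizer. I would invoke the Gay / Mor\'e--Sorensen characterization of the trust-region subproblem: a feasible $\mathbf{y}^\star = (A - \alpha I)^{-1}\mathbf{b}$ is the global minimizer if and only if $A - \alpha I$ is positive semidefinite, equivalently $\alpha \le \lambda_{\min}(A)$. On the interval $(-\infty, \lambda_{\min}(A))$ the map $\alpha \mapsto \mathbf{b}'(A - \alpha I)^{-2}\mathbf{b}$ is smooth and strictly increasing from $0$ to $+\infty$ (provided $\mathbf{b}$ has nonzero projection onto the minimum eigenspace, the non-degenerate case), so it hits the value $1$ exactly once, at the smallest root over all of $R$. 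Selecting this root delivers the global optimum and confirms the formula for $\mathbf{w}_k$.
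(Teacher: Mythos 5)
Your reduction and stationarity analysis coincide with the paper's: both substitute $\mathbf{w}=N\tilde{\mathbf{w}}$ to absorb the orthogonality constraints, arrive at $\min\,\tilde{\mathbf{w}}'A\tilde{\mathbf{w}}-2\mathbf{b}'\tilde{\mathbf{w}}$ on the unit sphere, and read off $(A-\alpha I)\tilde{\mathbf{w}}=\mathbf{b}$ together with the secular equation. Where you genuinely diverge is the justification for taking the \emph{minimum} root $\alpha$. You invoke the Gay/Mor\'e--Sorensen global-optimality condition for the trust-region subproblem ($A-\alpha I\succeq 0$, i.e.\ $\alpha\le\lambda_{\min}(A)$) and then argue that the secular function is strictly increasing on $(-\infty,\lambda_{\min}(A))$, so the admissible root is the smallest one. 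The paper instead gives an elementary, self-contained comparison: for any two stationary pairs $(\tilde{\mathbf{w}}_1,\alpha_1)$ and $(\tilde{\mathbf{w}}_2,\alpha_2)$ it derives the identity
\begin{equation*}
(\tilde{\mathbf{w}}_1'A\tilde{\mathbf{w}}_1-2\mathbf{b}'\tilde{\mathbf{w}}_1)-(\tilde{\mathbf{w}}_2'A\tilde{\mathbf{w}}_2-2\mathbf{b}'\tilde{\mathbf{w}}_2)=\frac{\alpha_1-\alpha_2}{2}\,\|\tilde{\mathbf{w}}_1-\tilde{\mathbf{w}}_2\|_2^2,
\end{equation*}
so the objective is ordered exactly as the multipliers are, and the smallest $\alpha$ wins without any appeal to external theory. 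Your route buys a cleaner conceptual picture (positive semidefiniteness of $A-\alpha I$ certifies global optimality, and it localizes the optimal root to the left of $\lambda_{\min}(A)$, which is exactly what the paper's Algorithm 3 exploits when it initializes to the left of $d_p$); the paper's route buys a two-line algebraic proof that needs no citation and also handles ties transparently. Both arguments share the same implicit non-degeneracy assumption, which you are right to flag: if $\mathbf{b}$ has zero component in the minimum eigenspace (the ``hard case'' of the trust-region literature), the secular equation may have no root below $\lambda_{\min}(A)$ and the stated solution formula must be amended; the paper's monotonicity argument for $g(\alpha)$ on $(-\infty,d_p)$ silently assumes $\tilde b_p\neq 0$ just as yours does. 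Your proof is correct.
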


\begin{proof}
Let ${\bf w}=N{\bf \tilde w}$, then the optimization (\ref{PLS_greedy}) can be written as 
\[\min \tilde {\bf w}' A\tilde {\bf w} - 2{\bf b}' \tilde {\bf w} \  \ s.t. \ \tilde {\bf w}' \tilde {\bf w} = 1.\]
Since we assume that ${\bf w}$ is a linear combination of the basis vectors in $N$, the orthogonality conditions in (\ref{PLS_greedy}) are automatically satisfied. Hence these conditions have been dropped in the new formulation. Then using Lagrange multipliers, we can show that the solution takes the form as stated above. 

Suppose there are two solutions of $\alpha$ that satisfy  ${\bf b}'(A-\alpha I)^{-2}{\bf b}=1$, corresponding to two pairs of solutions to the optimization, $({\bf \tilde w}_1 ,\alpha _1 )$ and $({\bf \tilde w}_2 ,\alpha _2 )$. Since $\tilde {\bf w}  = (A - \alpha I)^{ - 1} {\bf b}$,
\begin{equation}
\label{min_alpha1}
A{\bf \tilde w}_1=\alpha_1{\bf \tilde w}_1+b
\end{equation}
\begin{equation}
\label{min_alpha2}
A{\bf \tilde w}_2=\alpha_2{\bf \tilde w}_2+b.
\end{equation}
By multiplying (\ref{min_alpha1}) by ${\bf \tilde w}_1$, and (\ref{min_alpha2}) by ${\bf \tilde w}_2$, then subtracting the two new equations, we have  
\begin{equation}
\label{min_alpha3}
{\bf \tilde w}'_1A{\bf \tilde w}_1-{\bf \tilde w}'_2A{\bf \tilde w}_2=\alpha_1-\alpha_2+({\bf \tilde w}'_1-{\bf \tilde w}'_2)b.
\end{equation}
On the other hand, by multiplying (\ref{min_alpha1}) by ${\bf \tilde w}_2$, and (\ref{min_alpha2}) by ${\bf \tilde w}_1$, and subtracting the two new equations, we have  
\begin{equation}
\label{min_alpha4}
({\bf \tilde w}'_1-{\bf \tilde w}'_2)b=(\alpha_1-\alpha_2){\bf \tilde w}'_1{\bf \tilde w}_2.
\end{equation}
Given (\ref{min_alpha3}) and (\ref{min_alpha4}), it can be shown that 
\[
(\tilde {\bf w}'_1 A\tilde {\bf w}_1  - 2{\bf b}' \tilde {\bf w}_1 ) - (\tilde {\bf w}'_2 A\tilde {\bf w}_2  - 2{\bf b}' \tilde {\bf w}_2 ) =
(\alpha_1-\alpha_2)-({\bf \tilde w}'_1-{\bf \tilde w}'_2)b=
 \frac{{\alpha _1  - \alpha _2 }}{2}||\tilde {\bf w}_1  - \tilde {\bf w}_2 ||_2^2.
\]
Hence, one should select the minimum among all the feasible $\alpha$'s. 
\end{proof}

The equation ${\bf b}'(A-\alpha I)^{-2}{\bf b}=1$ is a secular equation, a well studied problem  in constrained eigenvalue decomposition \citep{Gander1989,Beck2006}. The more general problem of least squares with a quadratic constraint was discussed in \citep{Gander1981}. We can diagonalize the matrix $A$ as $A=U DU'$, in which $D$ is diagonal with eigenvalues $d_1, d_2,...,d_p$in decreasing order on the diagonal, and the columns of $U$ are the corresponding eigenvectors. Define

\[
g(\alpha ) = {\bf b}' (A - \alpha I)^{ - 2} {\bf b} = {\bf b}' U \left[ {\begin{array}{*{20}c}
   {\frac{1}{{(d_1  - \alpha )^2 }}} & {} & {} & {}  \\
   {} & {\frac{1}{{(d_2  - \alpha )^2 }}} & {} & {}  \\
   {} & {} &  \ddots  & {}  \\
   {} & {} & {} & {\frac{1}{{(d_p  - \alpha )^2 }}}  \\
\end{array}} \right]U'{\bf b}.
\]
Let ${\bf \tilde b} = U'{\bf b}$, then $g(\alpha ) ={\bf b}'(A-\alpha I)^{-2}{\bf b}= \sum\limits_i^{} {\frac{{\tilde b_i^2 }}{{(d_i  - \alpha )^2 }}} $, and hence $g(\alpha)=1$ is a secular equation. 
 $g(\alpha)$ increases strictly as $\alpha$ increases from $ - \infty$ to $d_p$, since 
\[
g'(\alpha ) = \sum\limits_i^{} {\frac{{2\tilde b_i^2 }}{{(d_i  - \alpha )^3 }}} 
\]
is positive for $- \infty < \alpha < d_p$. Moreover, given the limits 
\[
\mathop {\lim }\limits_{\alpha  \to  - \infty } g(\alpha ) =  0
\]
\[
\mathop {\lim }\limits_{\alpha  \to  d_p^{-} } g(\alpha ) =  \infty
\]
we can conclude that there is exactly one solution $\alpha < d_p$ to the equation $g(\alpha)=1$, \citep{Gander1989}. An iterative algorithm (Algorithm  \ref{SEC} ) is used to solve $g(\alpha)=1$ starting from a point to the left of the smallest eigenvalue $d_p$  \citep{Beck2006}.

Notice that calculating $g(\alpha ) = {\bf b}' (A - \alpha I)^{ - 2} {\bf b}$ involves inverting a $(p-k+1) \times (p-k+1)$ matrix, but $A$ has rank at most $q$. We can reduce the computational burden by the use of Woodbury matrix identity, 
\[(A - \alpha I)^{ - 1}=\frac{-1}{\alpha}(I-\frac{1}{\alpha n^2}N'X'Y(I+\frac{1}{\alpha n^2}Y'XNN'X'Y)^{-1}Y'XN).\]
The new format only requires inverting a $q \times q$ matrix, and in most applications, the number of responses $q$ is much less than the number of predictors $p$. Furthermore, $N$ is involved in $g(\alpha)$ in the form of $NN'=I-HH'$, in which $H$ is an orthonormal basis for $\{X'X{\bf w}_i\}, i<k$. $H$ can be constructed by Gram-Schmidt process as the algorithm successively finds the weight vectors ${\bf w}_i's$.

\begin{algorithm}
{
set $\tau=0$, choose $\alpha _0  = d_p  - \varepsilon _1 $\;
\While{stopping criterion is not satisfied, $|g(\alpha _\tau ) - 1| > \varepsilon _2 $}{
$\alpha _{\tau + 1}  = \alpha _{\tau}  + 2\frac{{g^{ - 1/2} (\alpha _\tau ) - 1}}{{g^{ - 3/2} (\alpha _\tau )g' (\alpha _\tau )}}$\;
}
}
\caption{{Iteration for solving secular equation.}\label{SEC}}
\end{algorithm}

\paragraph{Optimization over $M$}
The optimization over $M$ has a closed form solution. Let $\Delta = W^{(\tau + 1)}  - D^{(\tau)} $, and ${\bf  \delta}_{(j)}$ denote the $j_{th}$ row of $\Delta$, then each row of $M$ is given as ${\bf  m}_{(j)}  = [||{\bf  \delta}_{(j)} || - \frac{\lambda }{\mu}]_ +  \frac{{{\bf  \delta}_{(j)} }}{{||{\bf  \delta}_{(j)} ||}}$, in which $[z]_+=max\{z,0\}$. \\
\\
Convergence analysis of ADMM can be found in \cite{Eckstein1992,Hong2012,Goldstein2012, Nien2014}. In particular, it has been shown that ADMM converges linearly for strongly convex objective functions \cite{Goldstein2012}. Although the convergence is based on strong convexity assumptions, ADMM has been widely applied in practice \cite{Goldstein2009,Afonso2011}, even to nonconvex problems \cite{Boyd2011}.  The proposed Jointly Sparse Global SIMPLS Regression is one example of these nonconvex applications. 

\section{Simulation Experiments}
\label{sec:experiment}

We implement the simulation models in \citep{Bair2006,Keles2010}. There are four models all following  $Y = X\beta  + F $, in which the number of observation is $n=100$, and the dimension is $p=5000$. Full details of the models are given in Table \ref{table:exp_model}. We compare five different methods: the standard PLS regression (denoted as PLS in the following comparison tables), PLS generalized linear regression proposed by Bastien et al. \citep{bastien2005pls}, $\ell_1$ penalized PLS regression  \citep{Keles2010} (denoted as $\ell_1$ SPLS), Lasso \citep{tibshirani1996regression} and the Jointly Sparse Global SIMPLS regression ( denoted as $\ell_1/\ell_2$ SPLS). All the methods select their parameters by ten fold cross-validation, except for the PLS generalized linear regression, which stops including an additional component if the new component is not significant. The parameter $\mu$ in the Jointly Sparse Global SIMPLS-R is fixed to 2000, and updated in each iteration by a scaling factor $1.01\mu$. The experiments on real data in Section \ref{sec:application1} and \ref{sec:application2} are using the same setting of $\mu$. Two i.i.d sets are generated for each trial: one as the training set and one as the test set. Ten trials are conducted for each model, and the averaged results are listed in Table  \ref{table:simulation1}.

In most of the simulations, we observe that the proposed Jointly Sparse Global SIMPLS-R performs as good or better than other methods in terms of the prediction MSE. In particular, the number of variables and the number of components chosen in Jointly Sparse Global SIMPLS-R are usually less than the $\ell_1$ penalized PLS-R. We also calculate the $R^2$ for each method on the training data to measure the variation explained. The standard PLS regression and the PLS generalized linear regression proposed by Bastien et al. \citep{bastien2005pls} both have $R^2$ close to $1$, but the performance in terms of MSE is not ideal in the first three models for these methods. This may suggest that these models overfit the data. In addition to the averaged performance, the p-values of one sided paired t-test suggest that Jointly Sparse Global SIMPLS-R reduces model complexity significantly from those in the standard PLS regression and the PLS generalized linear regression. Lasso achieves low complexity in terms of the number of variables, but the MSE is high compared to Jointly Sparse Global SIMPLS-R and the $\ell_1$ penalized PLS-R. The cross validation time for Jointly Sparse Global SIMPLS Regression is long, searching over a two-dimensional grid of the number of components $K$ and the regularization parameter $\lambda$ to minimize MSE. However, the performance in terms of prediction MSE improves, and the model complexity in terms of the number of variables and the number of components both decreases compared with other methods in most simulations. 

\begin{table}
\scriptsize
\caption{Simulation models in \citep{Bair2006,Keles2010}: These models were originally proposed to test supervised principal components. Model 1 and Model 2 are designed such that one latent variable dominates the multi-collinearity. Model 3 has multiple latent variables, and Model 4 has a correlation structure from an autoregressive process. In these model descriptions, $I$ is used to denote the indicator function. We use $i$ ($1 \le i \le n$) to index the $n$ samples, $j$ ($1 \le j \le p$) to index the $p$ variables, and $k$ to index the hidden components.
\label{table:exp_model}}
\begin{tabular}{  *{1}{c} } 
\hline 
\\{\bf Model 1}\\
$
\begin{array}{l}
 H_{1}(i)  = 3I(i \le 50) + 4I(i > 50), \ 1 \le i \le n   \\ 
 H_{2}(i)  = 3.5 \\ 
 X_j  = H_k  + \varepsilon _j , \ p_{k - 1}  < j \le p_k ,\ k = 1,2, \ (p_0 ,...,p_2 ) = (0,50,p) \hspace{20 mm} \\ 
 \beta_j  = \left\{ {\begin{array}{*{20}c}
   {\frac{1}{{25}}} & {1 \le j \le 50}  \\
   0 & {51 \le j \le p}  \\
\end{array}} \right. \\ 
 \varepsilon _j \ is \ { N (0,I_n )} \ distributed, \ and  \  F \ is \ {N (0,1.5^2 I_n ) } \   distributed.\\ 
 \end{array}
 $
\\\\\hline
\\{\bf Model 2}\\
$
\begin{array}{l}
 H_{1}(i)  = 2.5I(i \le 50) + 4I(i > 50) \\ 
 H_{2}(i)  = 3.5 + 1.5I(u_{1i}  \le 0.4) \\ 
 H_{3}(i)  = 3.5 + 0.5I(u_{2i}  \le 0.7) \\ 
 H_{4}(i)  = 3.5 - 1.5I(u_{3i}  \le 0.3) \\ 
 H_{5}(i)  = 3.5 \\ 
 u_{1i} ,u_{2i} ,u_{3i} \ are\ i.i.d \ from\ Unif(0,1) \\ 
 X_j  = H_k  + \varepsilon _j , \ p_{k - 1}  < j \le p_k , \ k = 1,...,5,(p_0 ,...,p_5 ) = (0,50,100,200,300,p) \\ 
 \beta_j  = \left\{ {\begin{array}{*{20}c}
   {\frac{1}{{25}}} & {1 \le j \le 50}  \\
   0 & {51 \le j \le p}  \\
\end{array}} \right. \\ 
 \varepsilon _j \ is \ { N (0,I_n )} \ distributed, \ and  \  F \ is \ {N (0, I_n ) } \   distributed.\\ 
 \end{array}
$
\\ \\\hline
\\{\bf Model 3}\\
$
\begin{array}{l}
 H_{1}(i)  = 2.5I(i \le 50) + 4I(i > 50) \\ 
 H_{2}(i)  = 2.5I(1 \le i \le 25, \ or \ 51 \le i \le 75) + 4I(26 \le i \le 50, \ or \ 76 \le i \le 100) \\ 
 H_{3}(i)  = 3.5 + 1.5I(u_{1i}  \le 0.4) \\ 
 H_{4}(i)  = 3.5 + 0.5I(u_{2i}  \le 0.7) \\ 
 H_{5}(i)  = 3.5 - 1.5I(u_{3i}  \le 0.3) \\ 
 H_{6}(i)  = 3.5 \\ 
 u_{1i} ,u_{2i} ,u_{3i} \ are\ i.i.d \ from\ Unif(0,1) \\ 
 X_j  = H_k  + \varepsilon _j , \ p_{k - 1}  < j \le p_k , \ k = 1,...,5,(p_0 ,...,p_6 ) = (0,25,50,100,200,300,p) \\ 
 \beta_j  = \left\{ {\begin{array}{*{20}c}
   {\frac{1}{{25}}} & {1 \le j \le 50}  \\
   0 & {51 \le j \le p}  \\
\end{array}} \right. \\ 
 \varepsilon _j \ is \ { N (0,I_n )} \ distributed, \ and  \  F \ is \ {N (0, I_n ) } \   distributed.\\ 
 \end{array}
$
\\\\\hline
\\{\bf Model 4}\\
$
\begin{array}{l}
 H_{1}(i)  = I(i \le 50) + 6I(i > 50) \\ 
 H_{2}(i)  = 3.5 + 1.5I(u_{1i}  \le 0.4) \\ 
 H_{3}(i)  = 3.5 + 0.5I(u_{2i}  \le 0.7) \\ 
 H_{4}(i)  = 3.5 - 1.5I(u_{3i}  \le 0.3) \\ 
 H_{5}(i)  = 3.5 \\ 
 u_{1i} ,u_{2i} ,u_{3i} \ are\ i.i.d \ from\ Unif(0,1) \\ 
X=(X^{(1)},X^{(2)})\\
X^{(1)}\ is \ generated \ from \ N(0,\Sigma_{50 \times 50}),\ \Sigma \  is \  from  \ AR(1)  \ with  \ \rho=0.9.\\
 X_j^{(2)}  = H_k  + \varepsilon _j , \ p_{k - 1}  < i \le p_k ,\ k = 1,...,5,(p_0 ,...,p_5 ) = (0,50,100,200,300,p-50) \\ 
\beta_i=r_m \  for\ p_{m-1} < i \le p_m, \ m=1,...,6, \ where \\
 (p_0,...,p_6)=(0,10,20,30,40,50,p), \ (r_1,...,r_6)=(8,6,4,2,1,0)/25 \\
 \varepsilon _i \ is \ { N (0,I_n )} \ distributed, \ and  \  F \ is \ {N (0,1.5^2 I_n ) } \  distributed.\\ 
 \end{array}
$
\\\\\hline
\end{tabular}
\end{table}

\begin{landscape}
\begin{table}
\caption{Performance comparison table for the 4 simulation models. We compare five different methods: the standard PLS regression (PLS), PLS generalized linear regression proposed by Bastien et al. (Bastien), $\ell_1$ penalized PLS regression ($\ell_1$ SPLS), Lasso  and the Jointly Sparse Global SIMPLS regression ($\ell_1/\ell_2$ SPLS). These results are obtained by averaging over 10 trails. \label{table:simulation1}}
\tiny
\begin{center}

    \begin{tabular}{ *{10}{c}}
\hline \\
& & & & & & \multicolumn{4}{c}{p-values of one sided paired t-test}\\ 
\cline{7-10}\\
  & 1. PLS  & 2.  Bastien  & 3. $\ell_1$ SPLS & 4.  Lasso  & 5. $\ell_1/\ell_2$ SPLS & (5,1) & (5,2) & (5,3) & (5,4) \\ \\\hline   
\\{\bf Model 1}  \\   
number of comp.  & 1.4 & 5 & 1.9 &  NA  & 1.4 & $ 0.5$ &  $ 4.49\times10^{-7}$ &  $ 0.19$ & NA \\     
number of variables  & 5000 & 1129.4 & 246.5 & 40.7 & 276.1 &  $ 1.84\times10^{-11}$ &  $5.46\times10^{-5}$ &  $0.35$ &  $0.053$\\     
MSE  & 3.14 & 2.98 & 3.00 & 3.23 & 2.82 &  $0.01$ &  $0.043$ &  $0.075$ &  $0.006$\\ 
$R^2$  & 0.98 & 1 & 0.71 & 0.59 & 0.83\\        
Time CV  & 101.47 & 0 & 43.49 & 51.59 & 11414\\     
Time analysis  & 0.89 & 121.91 & 0.05 & 0.04 & 6.40\\     
Time prediction  & 0.010 & 0.011 & 0.002 & 0.04 & 0.002    \\     
Total time  & 102.37 & 121.92 & 43.54 & 51.67 & 11420\\  \\   \hline
\\{\bf Model 2} \\
number of comp.  & 2 & 5 & 2.3 &  NA  & 1.1 &  $0.0671$ &  $1.19\times10^{-11}$ &  $ 0.0184$ & NA \\  
number of variables  & 5000 & 1158.4 & 273.4 & 15.8 & 171.7 &  $1.13\times10^{-14}$ &  $ 6.91\times10^{-9}$ &  $0.1041$  &  $0.0119$ \\  
MSE  & 3.18 & 2.99 & 2.93 & 3.09 & 2.69 &  $ 8.69\times10^{-4}$ &  $0.0046$ &  $0.0344$  &  $ 2.70\times10^{-4}$\\  
$R^2$  & 0.98 & 1 & 0.79 & 0.39 & 0.75\\ 
Time CV  & 100.51 & 0 & 43.03 & 53.97 & 11420\\  
Time analysis  & 1.28 & 122.69 & 0.06 & 0.04 & 5.72\\ 
Time prediction  & 0.010 & 0.011 & 0.002 & 0.039 & 0.001\\ 
Total time  & 101.80 & 122.70 & 43.09 & 54.05 & 11426\\  \\\hline
\\{\bf Model 3}\\
number of comp.  & 1.4 & 5 & 1.4 &  NA  & 1.5 &  $0.4201$  & $5.53\times10^{-6}$ &  $ 0.3632$  & NA\\  
number of variables  & 5000 & 1156.4 & 89.2 & 41.3 & 60.5 &  $2.32\times10^{-19}$ &  $ 1.39\times10^{-12}$ &  $0.1697$ &  $ 0.1726$ \\  
MSE  & 1.82 & 1.48 & 1.27 & 1.48 & 1.25 &  $ 1.18\times10^{-4}$ &  $ 0.0104$ &  $ 0.3430$ &  $ 0.0054$ \\  
$R^2$  & 0.98 & 1 & 0.77 & 0.75 & 0.73\\ 
Time CV  & 102.61 & 0 & 43.84 & 49.45 & 11295\\  
Time analysis  & 1.03 & 126.08 & 0.04 & 0.04 & 5.48\\  
Time prediction  & 0.01 & 0.01 & 0.001 & 0.039 & 0.001\\  
Total time  & 103.65 & 126.09 & 43.88 & 49.53 & 11300\\\\  \hline
\\{\bf Model 4}\\    
number of comp.  & 2 & 5 & 2.6 &  NA  & 2.1 & $0.4057$  &  $ 6.82\times10^{-5}$ & $0.2201$  & NA\\  
number of variables  & 5000 & 1118.8 & 1260.8 & 9.4 & 1180.5 &  $9.62\times10^{-5}$ &  $0.4618$ &  $0.3918$  & $0.0485$  \\  
MSE  & 2.15 & 2.29 & 2.41 & 2.14 & 2.36 &  $0.0087$ & $0.1874$ &   $0.3812$ &   $0.0056$ \\  
$R^2$  & 1 & 1 & 0.78 & 0.19 & 0.91\\ 
Time CV  & 98.16& 0 & 44.31 & 50.52 & 12051\\  
Time analysis  & 1.55 & 123.79 & 0.10 & 0.04 & 7.97\\  
Time prediction  & 0.010 & 0.011 & 0.007 & 0.042 & 0.004\\  
Total time  & 99.73 & 123.8 & 44.41 & 50.60 & 12059\\\\  \hline
    \end{tabular}\end{center}

\end{table}
\end{landscape}

\section{Application 1: Chemometrics Study}
\label{sec:application1}

In this section we show experimental results obtained by comparing standard PLS-R, $\ell_1$ penalized PLS-R \citep{Keles2010} (denoted as $\ell_1$ SPLS in the performance table), and our proposed Jointly Sparse Global SIMPLS-R (denoted as $\ell_1/\ell_2$ SPLS  in the performance table). All the methods have been applied on the Octane data set (see \citep{Tenenhaus1998}).  The Octane data is a real data set consisting of 39 gasoline samples for which the digitized Octane spectra have been recorded at 225 wavelengths (in nm). The aim is to predict the Octane number, a key measurement of the physical properties of gasoline, using the spectra as predictors. This is of major interest in real applications, because the conventional procedure to calculate the Octane number is time consuming and involves expensive and maintenance-intensive equipment as well as skilled labor.

The experiments are composed of 150 trials. In each trial we randomly  split the 39 samples into 26 training samples and 13 test samples. The regularization parameter $\lambda$ and number of components $K$ are selected by 2-fold cross validation on the training set. The averaged results over the 150 trials are shown in Table \ref{exp}. We further show the variable selection frequencies for the sparse PLS methods over the 150 trials superimposed on the octane data in Fig. \ref{fig:select_freq_PLS} (B) and (C). In chemometrics, the rule of thumb is to look for variables that have large amplitudes in first derivatives with respect to wavelength. Notice that both $\ell_1$ penalized PLS-R and Jointly Sparse Global SIMPLS-R have selected variables around 1200 and 1350 nm, and the selected region in the latter case is more confined. Box and Whisker plots for comparing the MSE, number of selected variables, and number of components of these three PLS formulations are shown in Fig. \ref{fig:select_freq_PLS} (A).   Comparing our proposed Jointly Sparse Global SIMPLS Regression with standard PLS-R and $\ell_1$ penalized PLS-R \citep{Keles2010}, we show that Jointly Sparse Global SIMPLS-R attains better performance in terms of MSE, the number of predictors, and the number of components. Besides, the model complexity in Jointly Sparse Global SIMPLS-R is significantly lower than both standard PLS-R and $\ell_1$ penalized PLS-R, given the p-values of one sided paired t-test.

\begin{table}
\tiny
\caption{Performance comparison table for the Octane data. \label{exp}}
\begin{center}
    \begin{tabular}{  *{6}{c} }    
    \hline\\
    &  &   &   &  \multicolumn{2}{c}{p-values of one sided paired t-test} \\
\cline{5-6}\\
 & 1. PLS  & 2. $\ell_1$ SPLS  & 3. $\ell_1/\ell_2$ SPLS  & (3,1) & (3,2)\\ \\ \hline
 
number of comp.  & 5.5 & 4.5  &  3.8 & $5.1888\times10^{-16}$ &  0.0027\\ 
number of var.  & 225 & 87.3 & 38.5 &  $1.1176\times10^{-121}$ & $1.0967\times10^{-15}$\\ 
MSE & 0.0564  & 0.0509 & 0.0481 & 0.0032 &  0.1575 \\ \hline
\end{tabular}\end{center}
\end{table}

\begin{figure}
\makebox[12cm][l]{(A)}
\makebox{  
  \includegraphics[height=4cm]{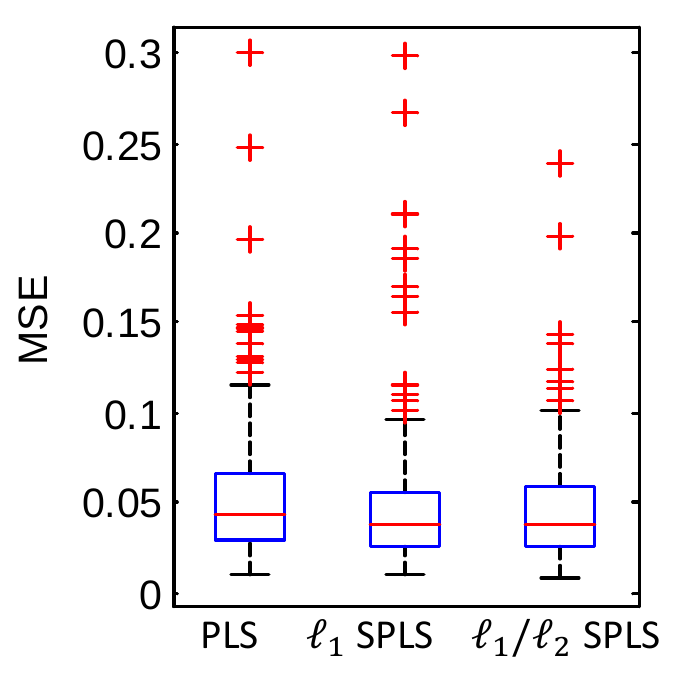}
  \includegraphics[height=4cm]{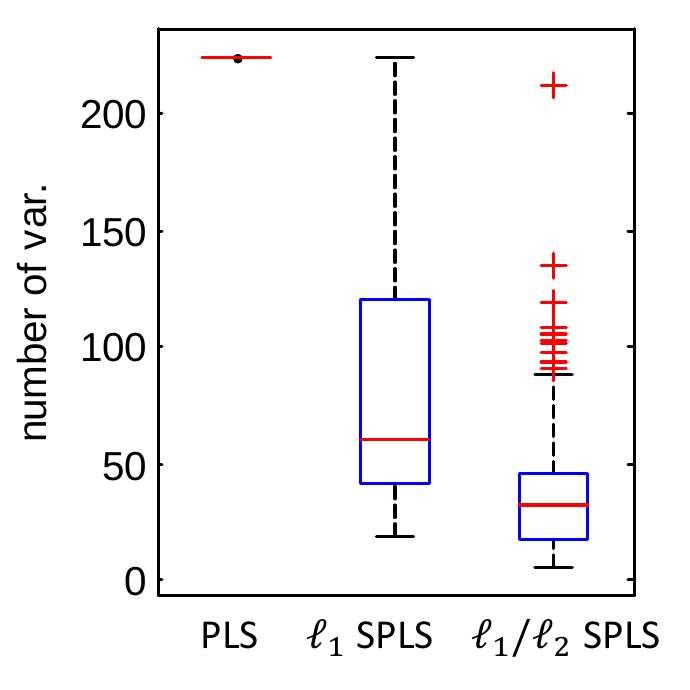}
  \includegraphics[height=4cm]{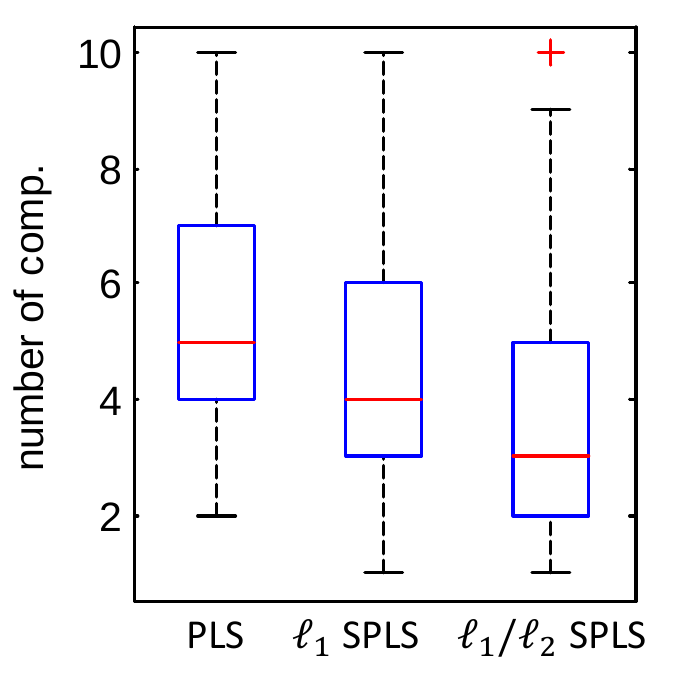}}\\
  \makebox[12cm][l]{} 
\makebox[12cm][l]{(B)}  
\makebox{ 
\includegraphics[height=4.5cm]{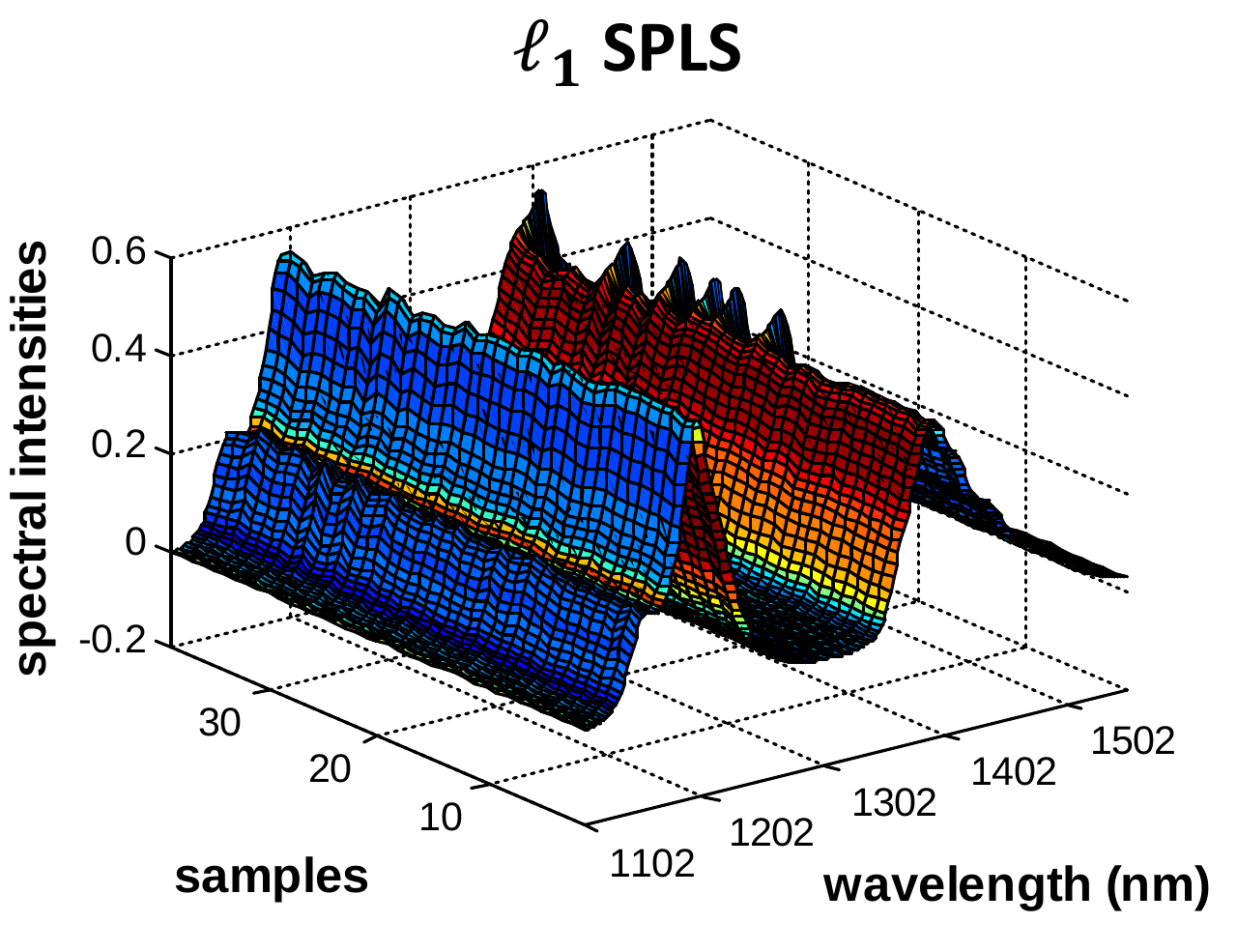} 
\includegraphics[height=4.5cm]{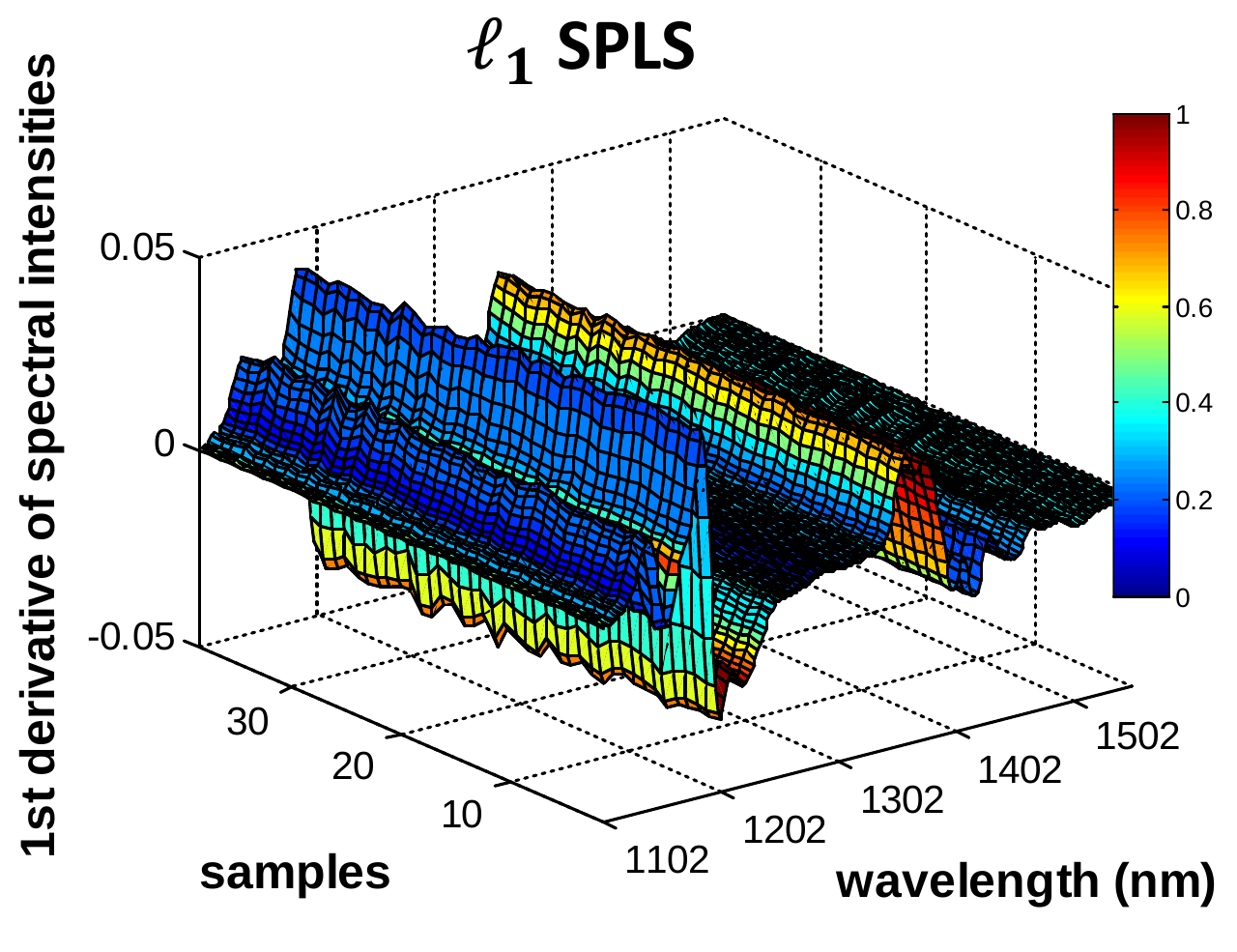}}\\ 
\makebox[12cm][l]{} 
\makebox[12cm][l]{(C)}
\makebox{ 
 \includegraphics[height=4.5cm]{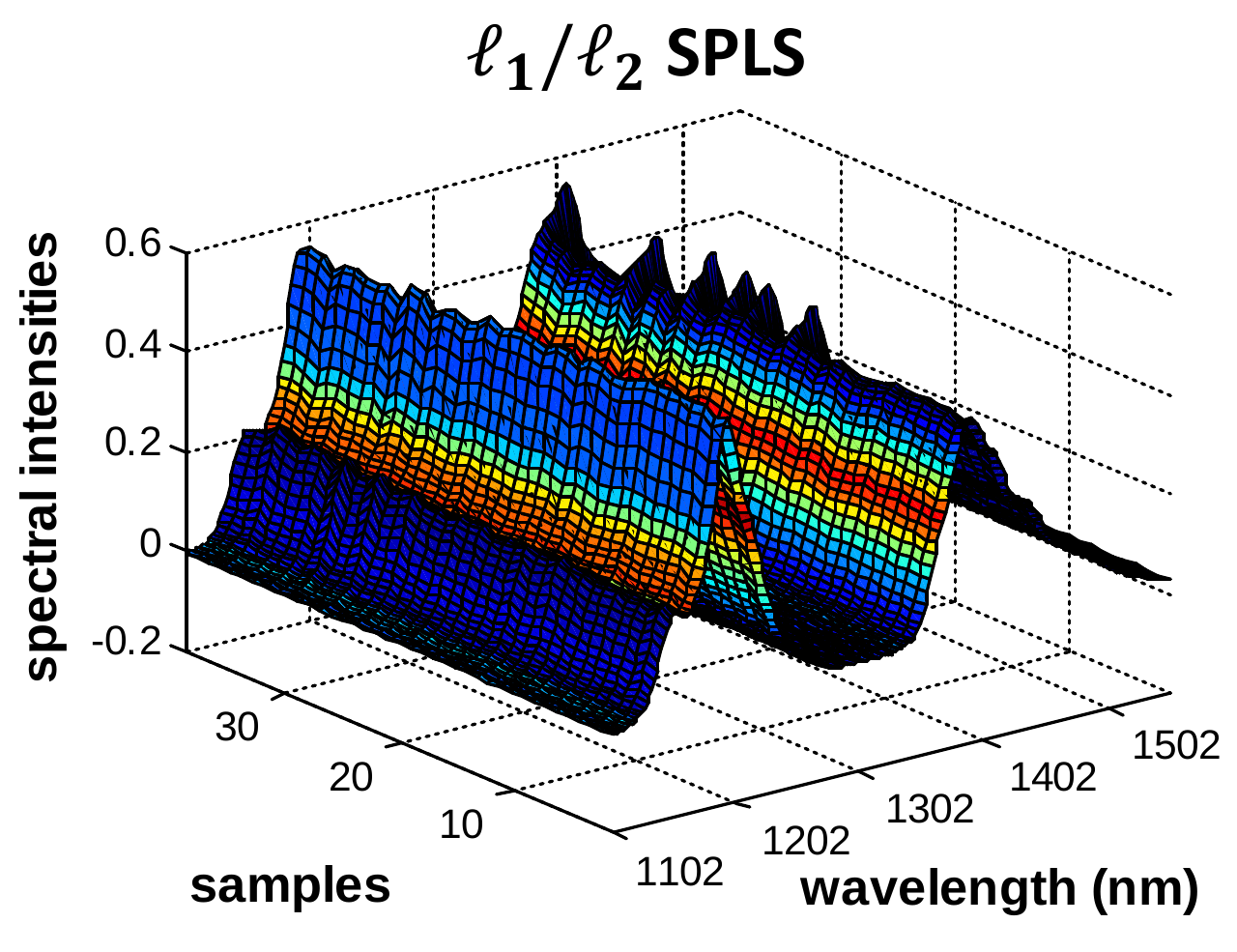} 
 \includegraphics[height=4.5cm]{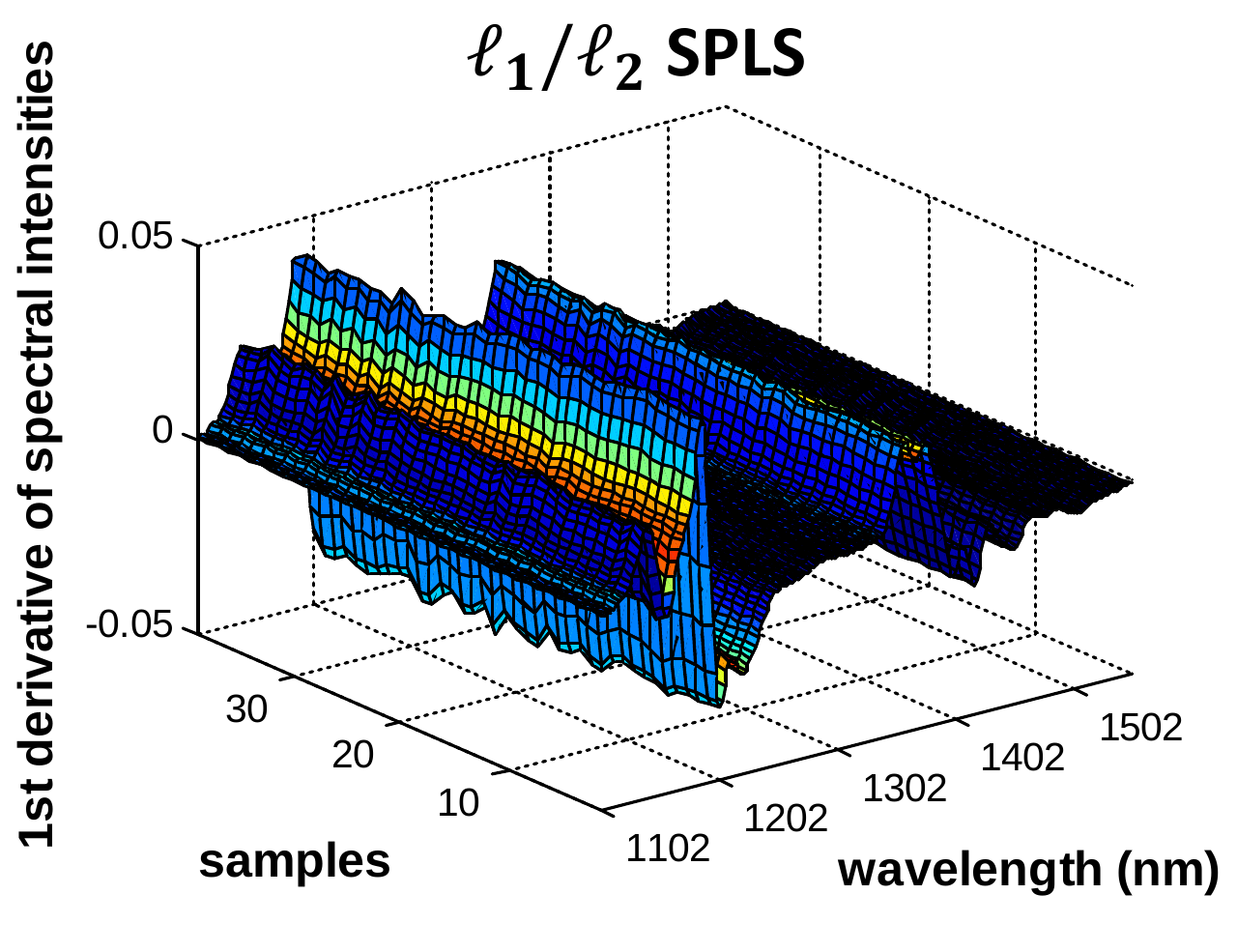}}\\
   \caption{
(A) Box and Whisker plots for comparing the MSE, number of selected variables, and number of components of three PLS formulations: standard PLS-R, $\ell_1$ penalized PLS-R ($\ell_1$ SPLS), our proposed Jointly Sparse Global SIMPLS-R ($\ell_1/\ell_2$ SPLS). (B)  Variable selection frequency of $\ell_1$ SPLS superimposed on the octane data and its first derivative: The height of the surfaces represents the exact value of the data over 225 variables for the 39 samples. The color of the surface shows the selection frequency of the variables as depicted on the colorbar.
(C)Variable selection frequency of $\ell_1/\ell_2$ SPLS superimposed on the octane data and its first derivative.
  \label{fig:select_freq_PLS} }
   \end{figure}

\section{Application 2: Sparse Prediction of Disease Symptoms from Gene Expression}
\label{sec:application2}
In this section we apply the Jointly Sparse Global SIMPLS Regression to 4 types of predictive health challenge studies involving the H3N2, the H1N1, the HRV, and the RSV viruses. In these challenge studies, publicly available from the NCBI-GEO website, serial peripheral blood
samples were acquired from a population of subjects inoculated with live flu viruses \cite{Zaas2009,Huang2011,Woods2013,Zaas2013}. The prediction task in these experiments is to predict the symptom scores based on gene expression of 12023 genes. There were 10 symptom scores, i.e., runny nose, stuffy nose, sneezing, sore throat, earache, malaise, cough, shortness of breath, headache, and myalgia, documented over time. The symptoms are self-reported scores, ranging from 0 to 3. We linearly interpolate the gene expressions to match them with the sampling time of the symptom reports. We compare the Jointly Sparse Global SIMPLS-R with standard PLS-R and $\ell_1$ penalized PLS-R by leaving one subject out as the test set, and the rest as the training set. The process is repeated until all subjects have been treated as the test set. The number of components for all methods and the regularization parameter in Jointly Sparse Global SIMPLS-R are selected by 2-fold cross validation to minimize the sum of the MSE of the responses. Since each subject has multiple samples, we perform the cross validation by splitting by subjects, i.e., no samples from the same subject will appear in both training and tuning sets. We restrict the  responses to the first 3 symptoms, which are the upper respiratory symptoms, and the results are shown in Table \ref{table:PLSsymptom3detail}. In most of the cases, the proposed Jointly Sparse Global SIMPLS-R method outperforms the standard PLS-R and $\ell_1$ penalized PLS-R in terms of prediction MSE, number of components, number of genes. As can be seen in Table \ref{table:PLSsymptom3detail}, the number of selected variables decreases significantly by applying the $\ell_1/\ell_2$ mixed norm sparsity penalty to the PLS-R objective function. Thus the proposed PLS-R method is able to construct a more parsimonious predictor relative to the other PLS-R methods having similar accuracy.

The PLS-R method can also be viewed as an exploratory data analysis tool for constructing low dimensional descriptors of the independent variables and response variables. Specifically, the general underlying matrix factorization 
model $X=TP'+E$ and $Y=TQ'+F$, with latent component $T=XW$, provides a factor analysis model for the independent and response variables $X$ and $Y$. $T$, $P$ and $Q$  can be interpreted in a similar manner as the singular vectors of  PCA. However, different from PCA that does not account for the response variables, $T$, $P$ and $Q$ contain information about both the independent variables and the response variables. The factor analysis interpretation of the underlying PLS model is that $T$ is a latent score matrix and $P$, $Q$ are latent factor loading matrices that associate $T$ with the independent variables and the response variables via the approximate matrix factorizations $X\approx TP'$ and $Y\approx TQ'$, respectively.  The correlations between the latent component $T$ and the sum of the 3 upper respiratory symptoms are reported in Table \ref{table:mf}, which also shows results for classic matrix factorization methods including non-negative matrix factorization (NMF) \cite{Paatero1994} and Bayesian linear unmixing (BLU) \cite{Dobigeon2009,Bazot2013}, previously applied to this dataset,  for comparison.  Notice the sparse PLS-R methods achieve higher correlation, as expected.  Remarkably, the proposed Jointly Sparse global SIMPLS-R achieves this higher degree of correlation with many fewer components and variables than the NMF and BLU methods. This experiment demonstrates that Jointly Sparse Global SIMPLS-R can be used as a factor analysis method to find the hidden molecular factors that best relate to the response.

\begin{table}
\caption{Performance comparison table for the predictive health study. We apply the Jointly Sparse Global SIMPLS Regression ($\ell_1/\ell_2$ SPLS) for sparse prediction of disease symptoms from gene expression. The performance is compared with standard PLS and $\ell_1$ sparse PLS.  $\ell_1/\ell_2$ SPLS achieves lower MSE with significantly fewer  variables in most of the studies.  \label{table:PLSsymptom3detail}}
\tiny
\begin{center}
    \begin{tabular}{  *{6}{c} }    \\\\
    \hline \\
&  &   &   &  \multicolumn{2}{c}{p-values of one sided paired t-test} \\
\cline{5-6}\\
 & 1. PLS  & 2. $\ell_1$ SPLS  & 3. $\ell_1/\ell_2$ SPLS  & (3,1) & (3,2)\\  \\ \hline
\\{\bf H1N1}\\
number of comp.  &  2.8 &  2.3 & 2.4 & 0.1163 & 0.3322\\ 
number of genes  & 12023 & 3624.1 &  3575.8 & $1.4451\times10^{-10}$ & 0.4842\\ 
Overall MSE &  0.599 & 0.603 & 0.591 & 0.2890 &  0.1094\\ 
Runny nose MSE & 0.167 & 0.165 & 0.167\\
Stuffy nose MSE & 0.281 & 0.282 & 0.269\\
Sneezing MSE & 0.151 & 0.157 & 0.155\\\hline 
\\{\bf H3N2}\\
number of comp.  &  3.2 & 2.5 &  1.9 & 0.0030 &  0.0863\\ 
number of genes  & 12023 & 3944.5 & 1721.5 & $1.7547\times10^{-9}$& 0.0601\\ 
Overall MSE &  0.623 &  0.622 & 0.609 &  0.3073 &  0.2530\\ 
Runny nose MSE & 0.186 & 0.174 & 0.173\\
Stuffy nose MSE & 0.277 & 0.284 & 0.272\\
Sneezing MSE & 0.160 & 0.164 & 0.165\\\hline 
\\{\bf HRV}\\
number of comp.  &  2.8 & 2.3 &  2.2 &  0.0484 &  0.3773\\ 
number of genes  & 12023 & 2193.2 & 1779.1 & $5.5038\times10^{-13}$ &  0.3522\\ 
Overall MSE &  0.628 &  0.607 &  0.603 &  0.2020 &  0.4490\\ 
Runny nose MSE & 0.243 & 0.226 & 0.232\\
Stuffy nose MSE & 0.324 & 0.323 & 0.314\\
Sneezing MSE & 0.062 & 0.058 & 0.057\\\hline 
\\{\bf RSV}\\
number of comp.  & 3.2 & 2.3 & 2.4 & 0.0198 & 0.4103\\ 
number of genes  & 12023 & 2445.4 & 3889.8 & $1.1584\times10^{-9}$ & 0.1472\\ 
Overall MSE &  0.866 & 0.920 & 0.855 & 0.3318 & 0.0567\\ 
Runny nose MSE & 0.312 & 0.327 & 0.312\\
Stuffy nose MSE & 0.412 & 0.448 & 0.397\\
Sneezing MSE & 0.143 & 0.145 & 0.145\\\hline 
    \end{tabular}
    \end{center}

\end{table}

\begin{table}
\tiny
\caption{Matrix factorization. We use the cross-validated parameters reported in Table \ref{table:PLSsymptom3detail} for standard PLS-R and $\ell_1$ penalized PLS-R, and the Jointly Sparse Global SIMPLS-R to decide the number of components, and search over the grid $\{1,2,...,10\}$ to find the number of factors that achieves the highest correlation for NMF and BLU. The correlation between each factor and the sum of responses is listed for each method. The first 3 methods, PLS, $\ell_1$ SPLS, and $\ell_1/\ell_2 $ SPLS are supervised matrix factorizations, where as NMF and BLU are unsupervised. The unsupervised methods require many more factors to achieve comparable correlation.  \label{table:mf}}
\begin{center}  \begin{tabular}{  *{11}{c} }
    \hline
 & \multicolumn{10}{c}{correlation of each factor with the sum of upper respiratory symptoms}\\ 
factor & 1 & 2 & 3 & 4 & 5 & 6 & 7 & 8 & 9 \\\hline
\\{\bf H1N1}\\
PLS &  0.47 &  0.38  & 0.38 \\
$\ell_1$ SPLS &  0.52  & 0.37\\
$\ell_1/\ell_2$ SPLS &   0.57 & 0.35 \\
NMF & 0.32 &   0.45 & 0.04\\
BLU &   0.27 & 0.19 & 0.02 & 0.51 &  0.16  &  0.06 \\\hline
\\{\bf H3N2}\\
PLS & 0.67 &   0.42  &  0.33\\
$\ell_1$ SPLS &   0.73  &  0.33 & 0.33\\
$\ell_1/L_2$ SPLS & 0.71  &  0.33 \\
NMF &  0.62 &  0.70 &  0.10\\
BLU &   0.54 & 0.73 & 0.26 & 0.33 & 0.01 & 0.02 &  0.28 & 0.05 & 0.00 \\\hline
\\{\bf HRV}\\
PLS &   0.45 & 0.43 & 0.35\\
$\ell_1$ SPLS &   0.52 & 0.38\\
$\ell_1/\ell_2$ SPLS &  0.53 & 0.42 \\
NMF &   0.02  &  0.22  &  0.19\\
BLU &  0.11  &  0.04 &   0.18  &  0.33 &   0.01 &   0.02 &  0.28 &   0.05  &  0.00 \\\hline
\\{\bf RSV}\\
PLS & 0.66 &   0.35 &   0.35\\
$\ell_1$ SPLS &  0.70   & 0.34\\
$\ell_1/\ell_2$ SPLS & 0.69 &   0.39\\
NMF & 0.41 &  0.13   & 0.16 &  0.01  &  0.31  & 0.02 &  0.11  & 0.11 &  0.01\\
BLU &  0.01 &   0.02 &  0.23  & 0.20 &   0.03 &   0.68 &  0.12 \\\hline
\end{tabular}\end{center}
\begin{tabular}{{c}}
\\
   \end{tabular}
   \end{table}


\section{Conclusion}
\label{sec:conclusion}
 The formulation of the global SIMPLS objective function with an added group sparsity penalty greatly reduces the number of variables used to predict the response. This suggests that when multiple components are desired, the variable selection technique should take into account the sparsity structure for the same variables among all the components. Our proposed Jointly Sparse Global SIMPLS Regression algorithm is able to achieve as good or better performance with fewer predictor variables and fewer components as compared to competing methods.  It is thus useful for performing dimension reduction and variable selection simultaneously in applications with large dimensional data but comparatively  few samples ($n<p$). 

The Jointly Sparse Global SIMPLS Regression objective function is minimized using augmented Lagrangian techniques and, in particular, the ADMM algorithm. The ADMM algorithm splits the optimization into an eigen-decomposition problem and a soft-thresholding that enforces sparsity constraints. The general framework is extendable to more complicated regularization and can thus be tailored for other PLS-type applications, e.g., positivity constraints or smoothness penalties.  For example, in the chemometric application, the data is smooth over the wavelengths and we can apply wavelet shrinkage on the data or include a total variation regularization to encourage smoothness. The sparsity constraints can be imposed on the wavelet coefficients if wavelet shrinkage is applied, or together with total variation regularization. The equivalence of soft wavelet shrinkage and total variation regularization was discussed in \citep{steidl2004equivalence}. One can also consider imposing sparsity structures on the weights corresponding to the same components, adding $\ell_1$ penalty within the groups, or total variation regularization, depending on the applications. The decoupling property of the ADMM algorithm allows one to extend the Jointly Sparse Global SIMPLS Regression to these various regularizations.



\begin{thebibliography}{9}

\bibitem[Wold {\it et~al}., 1983]{Wold1983}
Wold, S.,  Martens, H., and Wold, H. (1983). The multivariate calibration problem in chemistry solved by the PLS method. {\it Proceedings of the Conference on Matrix Pencils. Lectures Notes in Mathematics}, 286-293.

\bibitem[Sj{\"o}str{\"o}m {\it et~al}., 1983]{Sjostr1983}
Sj{\"o}str{\"o}m, M., and Wold, S., and Lindberg, W., and Persson, J., and Martens, H.  (1983). A multivariate calibration problem in analytical chemistry solved by partial least-squares models in latent variables. {\it Analytica Chimica Acta} {\bf 150}, 61-70.

\bibitem[Martens {\it et~al}., 1999]{martens1999validation}
Martens, H., and Martens, M. (1999). Validation of PLS Regression models in sensory science by extended cross-validation. {\it PLS'99}.

\bibitem[Rossouw {\it et~al}., 2008]{LeCRos08}
Rossouw, D., Robert-Grani\'e, C., and Besse, P. (2008). A sparse PLS for variable selection when integrating omics data. {\it Genetics and Molecular Biology}, {\bf 7}(1), 35.

\bibitem[Chun {\it et~al}., 2009]{chun2009expression}
Chun, H., and Kele{\c{s}}, S. (2009). Expression quantitative trait loci mapping with multivariate sparse partial least squares regression. {\it Genetics}, {\bf 182}(1), 79-90.

\bibitem[Chung {\it et~al}., 2010]{chung2010sparse}
Chung, D., and Kele{\c{s}}, S. (2010). Sparse partial least squares classification for high dimensional data. {\it Statistical applications in genetics and molecular biology}, {\bf 9}(1), 17.

\bibitem[Chun {\it et~al}., 2011]{Chun2011}
Chun, H., Ballard, D. H., Cho, J., and Zhao, H. (2011). Identification of association between disease and multiple markers via sparse partial least squares regression. {\it Genetic epidemiology}, {\bf 35}(6), 479-486.

\bibitem[Wold {\it et~al}., 1984]{WolRuh84}
Wold, S., Ruhe, A., Wold, H., and Dunn III, W. J. (1984). The collinearity problem in linear regression. The partial least squares (PLS) approach to generalized inverses. {\it SIAM Journal on Scientific and Statistical Computing}, {\bf 5}(3), 735-743.

\bibitem[Chun {\it et~al}., 2010]{Keles2010}
Chun, H., and Kele{\c{s}}, S. (2010). Sparse partial least squares regression for simultaneous dimension reduction and variable selection. {\it Journal of the Royal Statistical Society: Series B (Statistical Methodology)}, {\bf 72}, 3-25.

\bibitem[de Jong, 1993]{Jong1993}
de Jong, S. (1993). SIMPLS: an alternative approach to partial least squares regression. {\it Chemometrics and Intelligent Laboratory Systems}, {\bf 18}(3), 251-263.

\bibitem[Wold, 1966]{Wold66a}
Wold, H. (1966). Nonlinear estimation by iterative least squares procedures. {\it Research papers in statistics}, 411-444.

\bibitem[Hotelling, 1933]{Hot33}
Hotelling, H. (1933). Analysis of a complex of statistical variables into principal components. {\it Journal of educational psychology}, {\bf 24}(6), 417.


\bibitem[Tenenhaus, 1998]{Tenenhaus1998}
Tenenhaus, M. (1998). {\it La R\'egression PLS: th\'eorie et pratique}. Editions Technip.

\bibitem[de Jong, 1995]{DeJ95}
de Jong, S. (1995). PLS shrinks. {\it Journal of Chemometrics}, {\bf 9}(4), 323-326.

\bibitem[Boulesteix {\it et~al}., 2007]{Boulesteix2006}
Boulesteix, A. L., and Strimmer, K. (2007). Partial least squares: a versatile tool for the analysis of high-dimensional genomic data. {\it Briefings in bioinformatics}, {\bf 8}(1), 32-44.

\bibitem[H{\"o}skuldsson, 1988]{Hoskuldsson1988}
H{\"o}skuldsson, A. (1988). PLS regression methods. {\it Journal of Chemometrics}, {\bf 2}(3), 211-228.

\bibitem[ter Braak {\it et~al}., 1998]{Braak2006}
ter Braak, C. J., and de Jong, S. (1998). The objective function of partial least squares regression. {\it Journal of chemometrics}, {\bf 12}(1), 41-54.

\bibitem[Jolliffe {\it et~al}., 2003]{Jolliffe2003}
Jolliffe, I. T., Trendafilov, N. T., and Uddin, M. (2003). A modified principal component technique based on the LASSO. {\it Journal of Computational and Graphical Statistics}, {\bf 12}(3), 531-547.

\bibitem[Wold, 1975]{Wold1975}
Wold, H. (1975). Soft modelling by latent variables: the non-linear iterative partial least squares (NIPALS) approach. {\it Perspectives in Probability and Statistics, In Honor of MS Bartlett}, 117-144.

\bibitem[Allen {\it et~al}., 2013]{Allen2013}
Allen, G. I., Peterson, C., Vannucci, M., and Maleti\'{c}-Savati\'{c}, M. (2013). Regularized partial least squares with an application to NMR spectroscopy. {\it Statistical Analysis and Data Mining}, {\bf 6}(4), 302-314.

\bibitem[Bach, 2008]{Bach2008}
Bach, F. R. (2008). Consistency of the group Lasso and multiple kernel learning. {\it The Journal of Machine Learning Research}, {\bf 9}, 1179-1225.

\bibitem[Gander, 1981]{Gander1981}
Gander, W. (1980). Least squares with a quadratic constraint. {\it Numerische Mathematik}, {\bf 36}(3), 291-307.

\bibitem[Gander {\it et~al}., 1989]{Gander1989}
Gander, W., Golub, G. H., and von Matt, U. (1989). A constrained eigenvalue problem. {\it Linear Algebra and its applications}, {\bf 114}, 815-839.

\bibitem[Beck {\it et~al}., 2006]{Beck2006}
Beck, A., Ben-Tal, A., and Teboulle, M. (2006). Finding a global optimal solution for a quadratically constrained fractional quadratic problem with applications to the regularized total least squares. {\it SIAM Journal on Matrix Analysis and Applications}, {\bf 28}(2), 425-445.

\bibitem[Eckstein {\it et~al}., 1992]{Eckstein1992}
Eckstein, J., and  Bertsekas, D. P. (1992). On the Douglas-Rachford splitting method and the proximal point algorithm for maximal monotone operators. {\it Mathematical Programming}, {\bf 55}(1-3), 293-318.

\bibitem[Goldstein {\it et~al}., 2009]{Goldstein2009}
Goldstein, T., and Osher, S. (2009). The split Bregman method for L1-regularized problems. {\it SIAM Journal on Imaging Sciences}, {\bf 2}(2), 323-343.

\bibitem[Afonso {\it et~al}., 2011]{Afonso2011}
Afonso, M. V., Bioucas-Dias, J. M., and Figueiredo, M. A. (2011). An augmented Lagrangian approach to the constrained optimization formulation of imaging inverse problems. {\it Image Processing, IEEE Transactions on}, {\bf 20}(3), 681-695.

\bibitem[Boyd {\it et~al}., 2011]{Boyd2011}
Boyd, S., Parikh, N., Chu, E., Peleato, B., and Eckstein, J. (2011). Distributed optimization and statistical learning via the alternating direction method of multipliers. {\it Foundations and Trends{\textregistered}  in Machine Learning}, {\bf 3}(1), 1-122.

\bibitem[Hong {\it et~al}., 2012]{Hong2012}
Hong, M., and Luo, Z. Q. (2012). On the linear convergence of the alternating direction method of multipliers. arXiv preprint arXiv:1208.3922.

\bibitem[Goldstein {\it et~al}., 2012]{Goldstein2012}
Goldstein, T., O'Donoghue, B., and Setzer, S. (2012). Fast alternating direction optimization methods. CAM report, 12-35.

\bibitem[Ramani {\it et~al}., 2012]{Ramani2012}
Ramani, S., and Fessler, J. A. (2012) A splitting-based iterative algorithm for accelerated statistical X-ray CT reconstruction. {\it Medical Imaging, IEEE Transactions on}, {\bf 31}(3), 677-688.

\bibitem[Nien {\it et~al}., 2014]{Nien2014}
Nien, H., and Fessler, J. A. (2014). A convergence proof of the split Bregman method for regularized least-squares problems. arXiv preprint arXiv:1402.4371.

\bibitem[Bair, 2006]{Bair2006}
Bair, E., Hastie, T., Paul, D., and Tibshirani, R. (2006). Prediction by supervised principal components. {\it Journal of the American Statistical Association}, {\bf 101}(473).

\bibitem[Bastien {\it et~al}., 2005]{bastien2005pls}
Bastien, P., Vinzi, V. E., and Tenenhaus, M. (2005). PLS generalised linear regression. {\it Computational Statistics \& Data Analysis}, {\bf 48}(1), 17-46.

\bibitem[Tibshirani, 1996]{tibshirani1996regression}
Tibshirani, R. (1996). Regression shrinkage and selection via the lasso. {\it Journal of the Royal Statistical Society. Series B (Methodological)}, 267-288.

\bibitem[Martens {\it et~al}., 1989]{MarNae89} 
Martens, H., and Naes, T. (1989). {\it Multivariate calibration}. Wiley.

\bibitem[Masoum {\it et~al}., 2006]{masoum2006discrimination}
Masoum, S., Bouveresse, D. J. R., Vercauteren, J., Jalali-Heravi, M., and Rutledge, D. N. (2006). Discrimination of wines based on 2D NMR spectra using learning vector quantization neural networks and partial least squares discriminant analysis. {\it Analytica chimica acta}, {\bf 558}(1), 144-149.

\bibitem[Zaas {\it et~al}., 2009]{Zaas2009}
Zaas, A. K., Chen, M., Varkey, J., Veldman, T., Hero, A. O., Lucas, J., Huang, Y., Turner, R., Gilbert, A., Lambkin-Williams, R., {\O}ien, N. C., Nicholson, B., Kingsmore, S., Carin, L., Woods, C. W., and Ginsburg, G. S.  (2009). Gene Expression Signatures Diagnose Influenza and Other Symptomatic Respiratory Viral Infections in Humans. {\it Cell Host and Microbe}, {\bf 6}(3), 207-217.

\bibitem[Huang {\it et~al}., 2011]{Huang2011}
Huang, Y.,  Zaas, A. K., Rao, A.,  Dobigeon, N., Woolf, P. J., Veldman, T., {\O}ien, N. C., McClain, M. T., Varkey, J. B., Nicholson, B.,  Carin, L., Kingsmore, S.,  Woods, C. W., Ginsburg, G. S., Hero, A. O. (2011). Temporal dynamics of host molecular responses differentiate symptomatic and asymptomatic influenza a infection. {\it PLoS genetics}, {\bf 7}(8), e1002234.

\bibitem[Woods {\it et~al}., 2013]{Woods2013}
Woods., C. W., McClain, M. T., Chen, M., Zaas, A. K., Nicholson, B. P., Varkey, J., Veldman, T., Kingsmore, S. F., Huang, Y., Lambkin-Williams, R., Gilbert, A. G., Hero, A. O., Ramsburg, E., Glickman, S., Lucas1, J. E., Carin, L., and Ginsburg, G. S. (2013). A host transcriptional signature for presymptomatic detection of infection in humans exposed to influenza H1N1 or H3N2. {\it PloS one} 8.1: e52198.

\bibitem[Zaas {\it et~al}., 2013]{Zaas2013}
Zaas, A. K., Burke, T., Chen, M., McClain, M., Nicholson, B., Veldman, T., Tsalik, E. L., Fowler, V., Rivers, E. P., Otero, R., Kingsmore, S. F., Voora, D., Lucas, J., Hero, A. O., Carin, L., Woods, C. W.,  and Ginsburg, G. S. (2013). A Host-Based RT-PCR Gene Expression Signature to Identify Acute Respiratory Viral Infection. {\it Science translational medicine}, {\bf 5}, 203ra126.

\bibitem[Paatero {\it et~al}., 1994]{Paatero1994}
Paatero, P.,  Tapper, U. (1994). Positive matrix factorization: A non-negative factor model with optimal utilization of error estimates of data values. {\it Environmetrics}, {\bf 5}(2), 111-126.

\bibitem[Dobigeon {\it et~al}., 2009]{Dobigeon2009}
Dobigeon, N., Moussaoui, S., Coulon, M., Tourneret, J. Y., and Hero, A. O. (2009). Joint Bayesian endmember extraction and linear unmixing for hyperspectral imagery. {\it Signal Processing, IEEE Transactions on}, {\bf 57}(11), 4355-4368.

\bibitem[Bazot {\it et~al}., 2013]{Bazot2013}
Bazot, C., Dobigeon, N., Tourneret, J.-Y., Zaas, A. K., Ginsburg, G. S. , Hero, A. O. (2013). Unsupervised Bayesian linear unmixing of gene expression microarrays. {\it BMC Bioinformatics}, {\bf 14}(1), 99.

\bibitem[Steidl {\it et~al}., 2004]{steidl2004equivalence}
Steidl, G., Weickert, J., Brox, T., Mrázek, P., and Welk, M. (2004). On the equivalence of soft wavelet shrinkage, total variation diffusion, total variation regularization, and SIDEs. {\it SIAM Journal on Numerical Analysis}, {\bf 42}(2), 686-713.





\end{thebibliography}
\end{document}